\newtheorem{thm}{Theorem}[section]
\newtheorem{prop}[thm]{Proposition}
\newtheorem{lemma}[thm]{Lemma}
\newtheorem{definition}[thm]{Definition}
\theoremstyle{definition}
\newtheorem{example}[thm]{Example}
\newtheorem{remark}[thm]{Remark}
\newcommand{\ahat}[1] {\underline{\hat{a}}_{#1}}
\newcommand{\rter}[1] {\mathcal{R}_{\{#1\}}}
\title{Generalized chord diagram expansions of Dyson-Schwinger equations}
\author{Markus Hihn and Karen Yeats}
\thanks{MH was supported by the Humboldt Foundation and wants to thank Dirk Kreimer for his help.  KY is supported by an NSERC discovery grant.}
\begin{document}

\begin{abstract}
Series solutions for a large family of single equation Dyson-Schwinger equations are given as expansions over decorated rooted connected chord diagrams.  The analytic input to the new expansions are the expansions of the regularized integrals for the primitive graphs building the Dyson-Schwinger equation.  Each decorated chord diagram contributes a weighted monomial in the coefficients of the expansions of the primitives and so indexes the analytic solution in a tightly controlled way.
\end{abstract}

\maketitle

\section{Introduction}

Dyson-Schwinger equations are integral equations in quantum field theory; they correspond to the classical equations of motion and so are physically highly meaningful and important.  On the more mathematical side, Dyson-Schwinger equations have a recursive structure which mirrors the decomposition of Feynman diagrams into subdiagrams.  This means that Dyson-Schwinger equations act similarly to functional equations satisfied by generating functions, and so combinatorial tools are useful in understanding them.  

A first step in this direction is strictly diagrammatic, and views Dyson-Schwinger equations as equations which recursively generate the Feynman diagrams themselves, or similarly rooted trees representing their subdiagram structure.  This view neglects the analytic side -- the diagrams still need to be evaluated, but is already interesting, and has been pursued by Lo\"ic Foissy \cite{Fdse, Fphysical}.

The next step is to incorporate the analytic information.  We would like to do so while maintaining a combinatorial understanding of the objects.  In \cite{kythesis} (also available as \cite{Ymem} with an updated final chapter) the second one of us discussed a transformation of a class of analytic Dyson-Schwinger equations into a different form which is well suited for this kind of treatment, see \eqref{DSE}.    In \cite{MYchord} the second author, along with Nicolas Marie, gave the series solution to a particular, fairly narrow class of such analytic Dyson-Schwinger equations as an expansion indexed by rooted connected chord diagrams, where each chord diagram contributes a single monomial.  This expansion was novel and unexpected, hence somewhat interesting.  However, it was unclear to what extent it could be generalized to a broader class of Dyson-Schwinger equations, hence it was unclear whether or not this chord diagram expansion was just a peculiarity or how much it might or might not be telling us something actually interesting for physics.

In this paper, which is based on the PhD thesis of the first author \cite{Hphd}, we generalize the chord diagram expansion of \cite{MYchord} to a substantially larger class of Dyson-Schwinger equations which includes the form of typical single equation, single scale Dyson-Schwinger equations in physics.  The main result, Theorem~\ref{main thm} is a series solution to any Dyson-Schwinger equation of this class.  As in the special case studied before, this expansion is indexed by rooted connected chord diagrams with each contributing a single monomial.  The difference is that the chord diagrams are now decorated with the set of possible decorations determined by shape of the Dyson-Schwinger equation, and the monomials come with a weight depending on the chord diagram and its decorations.

\medskip

The structure of this paper is as follows.  First we will briefly discuss the Dyson-Schwinger set up in Section~\ref{sec DSE}.  Section~\ref{sec chord} defines rooted connected chord diagrams and the features of them which we will need.  Section~\ref{sec tree} defines the insertion tree of a chord diagram and the weight of a chord diagram.   Note that these insertion trees are different from the insertion trees of Feynman graphs which are trees which capture the subdivergence structure of Feynman graphs.  Section~\ref{sec shuffle} investigates the insertion trees in more detail answering the question of how the labels of two subtrees can be consistently combined into one tree.  Section~\ref{sec:diamond} looks at the decomposition of chord diagrams coming from decomposing the corresponding trees into the two subtrees given by the children of the root.  The main result is presented and proved in Section~\ref{sec main}, and the paper ends with a brief conclusion.

\section{Dyson-Schwinger equations}\label{sec DSE}

Suppose we begin with a Dyson-Schwinger equation which is more or less in a recognizable physics form, having only nonstandard notation and normalization, for example
\begin{equation}\label{Yukawa eq orig}
G(x,L) = 1 - \frac{x}{q^2}\int d^4k\frac{k\cdot q}{k^2G(x, \log k^2)(k+q)^2} - \cdots \bigg|_{q^2=\mu^2}
\end{equation}
This is the Dyson-Schwinger equation for the part of the massless fermion self-energy in Yukawa theory which is formed by inserting 
\includegraphics{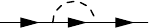} into itself iteratively in all possible ways.  In \eqref{Yukawa eq orig} $x$ is playing the role of the coupling constant, $q$ is the momentum going through, renormalization is taking place by subtraction at a fixed reference scale $\mu$, and $G(x,L)$ is the fermion Green function.  This Dyson-Schwinger equation was solved by Broadhurst and Kreimer in \cite{BKerfc}.

Now suppose we expand $G(x, L)$ in $L$, convert logarithms to powers using $\frac{d^k}{d\rho^k} y^{\rho} |_{\rho=0} = \log^k(y)$,  swap the order of the operators, and recombine the expansion.  Then we obtain
\begin{equation}\label{Yukawa eg mod}
G(x,L) = 1 - xG(x, \partial_{-\rho})^{-1}(e^{-L\rho}-1)F(\rho)|_{\rho=0}
\end{equation}
where $F(\rho)$ is the Feynman integral for the primitive\footnote{This Feynman graph is primitive in the appropriate renormalization Hopf algebra, hence the term \emph{primitive}.  This algebraic framework, while it underpins the entire theory to which this paper contributes, will not be used directly and so will not be defined.  The reader can see \cite{ck0, e-fksurvey, vS, vS2} to read more about it.}, \includegraphics{Yukawaeg} in this case, regularized on the propagator we are inserting at and the integral evaluated at $q^2 = 1$.  In this case $F(\rho) = 1/(\rho)(1-\rho)$.  Example 3.7 of \cite{kythesis} shows this example in detail.

In \cite{MYchord}, \eqref{Yukawa eg mod} is solved as an expansion over rooted connected chord diagrams.  However, the form of \eqref{Yukawa eg mod} is quite specialized.  First of all there is only one primitive Feynman graph.  Second $G(x,L)$ appears once in the denominator of the integrand of the original Dyson-Schwinger equation.  This corresponds to the fact that there is one insertion place.  The more general equation which we will solve here is
\begin{equation}\label{DSE}
G(x,L) = 1 - \sum_{k\geq 1}x^kG(x, \partial_{-\rho})^{1-sk}(e^{-L\rho}-1)F_k(\rho)|_{\rho=0}
\end{equation}
In this equation, $k$ indexes the loop numbers of the primitives.  $s$ is a parameter indicating the degree of the insertion.  The $F_k(\rho)$ are again the regularized Feynman integrals of the primitives.  From now on we will view the $F_k(\rho)$ as given.  Since we are interested in series solutions we will always think of them in terms of their expansions which we assume to have a simple pole at $0$.  We will use the notation
\[
F_k(\rho) = \sum_{i \geq 0} a_{k,i}\rho^{i-1}
\]
for the expansions of the $F_k$.

The previous example, \eqref{Yukawa eg mod}, was the $k=1$, $s=2$ case of \eqref{DSE}.  The photon in quantum electrodynamics would be an $s=1$ case.  We can see this because at 1 loop ($k=1$) there is no insertion place for a photon in the one loop photon correction, for any two loop photon correction there is one insertion place for a photon, and the sequence continues as described above.  See \cite{kythesis} Section 3.3.2 for a discussion of the combinatorics of counting insertions.

Returning to the idea of combinatorial Dyson-Schwinger equations mentioned in the introduction, a good way to think about Dyson-Schwinger equations as in \eqref{DSE}, is to first have a combinatorial Dyson-Schwinger equation in one variable $x$, which captures how the contributing Feynman diagrams (or trees) are formed recursively.  Then the combinatorial Dyson-Schwinger equation can be directly translated into the form of \eqref{DSE} which we call the associated analytic Dyson-Schwinger equation.  See \cite{kythesis} for details.

\medskip 

From a mathematical perspective the problem we will solve in this paper is the following.  Suppose $s$ and the expansions of the $F_k$ are given.  The Dyson-Schwinger equation \eqref{DSE} recursively defines the coefficients of $G(x,L)$ viewed as a bivariate power series in $x$ and $L$.  However, it does not do so in a nice or easy to use way.  We will give an explicit, combinatorial description of the series $G(x,L)$ solving \eqref{DSE}, see Theorem~\ref{main thm}.  This gives the solution to this substantial class of Dyson-Schwinger equations as a kind of weighted generating function of chord diagrams.

There is one property of the series expansion of $G(x,L)$ which we can see directly from the Dyson-Schwinger equation.  This is renormalization group equation translated into this context.

\begin{prop}\label{prop:diff rg eqn}
Let $G(x,L)$ satisfy \eqref{DSE}.  Suppose 
\[
G(x,L) = 1 - \sum_{i\geq 1}L^i\gamma_i(x) \quad \text{and} \quad \gamma_i(x) = \sum_{j \geq i} \gamma_{i,j}x^j
\]
Then
\[
\gamma_k(x) = \frac{1}{k}\gamma_1(x) \left(1 - sx\frac{d}{dx}\right)\gamma_{k-1}(x) 
\]
\end{prop}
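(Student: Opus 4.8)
The plan is to read the recursion off from the $L$-derivative of \eqref{DSE}. The only $L$-dependence on the right-hand side sits in $e^{-L\rho}-1=\sum_{m\ge 1}\frac{(-1)^m}{m!}L^m\rho^m$, so extracting the coefficient of $L^k$ (for $k\ge 1$) and matching it to the coefficient $-\gamma_k$ on the left gives
\begin{equation*}
\gamma_k(x)=\frac{(-1)^k}{k!}\sum_{j\ge 1}x^j\,G(x,\partial_{-\rho})^{1-sj}\big[\rho^k F_j(\rho)\big]\big|_{\rho=0}.
\end{equation*}
Since $\rho^kF_j(\rho)$ is a genuine power series for $k\ge 1$, this is unambiguous, and it already shows $\gamma_k=O(x^k)$, consistent with the stated ranges for the $\gamma_i$. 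This formula, or equivalently its generating version obtained by differentiating in $L$, is the starting point.

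The engine of the proof is the shift identity $e^{L\rho}\,\partial_{-\rho}\,e^{-L\rho}=\partial_{-\rho}+L$, i.e. $\phi(\partial_{-\rho})[e^{-L\rho}h(\rho)]=e^{-L\rho}\phi(\partial_{-\rho}+L)[h(\rho)]$ for any formal series $\phi$; this is the statement that $e^{-L\rho}$ is an eigenvector of $\partial_{-\rho}$ of eigenvalue $L$. Differentiating \eqref{DSE} in $L$ produces the operator acting on $e^{-L\rho}\rho F_j$, where $\rho F_j$ is now pole-free, and applying the identity with $\phi=G(x,\,\cdot\,)^{1-sj}$ and $h=\rho F_j$ converts the operator into the ordinary function $G(x,L)^{1-sj}$ and its $L$-derivatives, yielding
\begin{equation*}
-\partial_L G(x,L)=-\sum_{j\ge 1}x^j\sum_{i\ge 0}a_{j,i}(-\partial_L)^i\big[G(x,L)^{1-sj}\big].
\end{equation*}
Evaluating at $L=0$ identifies the value of the right-hand side at $L=0$ with $\gamma_1(x)$.

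It then remains to show that this first-order equation coincides with
\begin{equation*}
-\partial_L G=\gamma_1(x)\Big[1+\big(1-sx\tfrac{d}{dx}\big)(1-G)\Big],
\end{equation*}
since the coefficient of $L^{k-1}$ in the latter is exactly $k\gamma_k=\gamma_1(1-sx\,d/dx)\gamma_{k-1}$, which is the claim. The operator $1-sx\,d/dx$ is meant to be produced by pairing the explicit factor $x^j$ against the exponent $1-sj$ of $G^{1-sj}$: rewriting $x^jG^{1-sj}=G\,(xG^{-s})^j$ exhibits $xG^{-s}$ as the running coupling, and $x\,d/dx$ reads off the loop number $j$ that appears in the exponent as $-sj$. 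I would establish the equality of the two displayed equations by induction on the $L$-degree $k$ (equivalently on the order in $x$, using $\gamma_k=O(x^k)$): expanding $G^{1-sj}$ in $L$ introduces only the lower coefficients $\gamma_1,\dots,\gamma_{k-1}$, for which the recursion is already available, so that the a priori complicated single sum over $j$ reorganizes onto $\gamma_1(1-sx\,d/dx)\gamma_{k-1}$.

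The main obstacle is precisely this collapse. The right-hand side still carries all of the analytic data $a_{j,i}$ explicitly, and the entire content of the proposition is that, \emph{on a solution} of \eqref{DSE}, this data reassembles into the single series $\gamma_1$ acting through a universal differential operator. Keeping the bookkeeping of the shift identity honest in the presence of the simple pole of $F_j$—where $(\partial_{-\rho})^m$ manufactures higher-order poles $\rho^{-1-m}$ that must be tracked and ultimately cancel—is the delicate technical point, and is the reason the combination $e^{-L\rho}-1$ rather than $e^{-L\rho}$ occurs in \eqref{DSE}.
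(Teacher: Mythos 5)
Your setup is sound: extracting $[L^k]$ from \eqref{DSE}, the shift identity $\phi(\partial_{-\rho})[e^{-L\rho}h(\rho)]=e^{-L\rho}\phi(\partial_{-\rho}+L)[h(\rho)]$, and the observation that the claimed recursion is equivalent to the single first-order equation $-\partial_L G=\gamma_1\bigl[1+\bigl(1-sx\tfrac{d}{dx}\bigr)(1-G)\bigr]$ are all correct. But the argument stops exactly where the work begins. The entire content of the proposition is that, for $G$ solving \eqref{DSE},
\[
\sum_{j\ge 1}x^j\sum_{i\ge 0}a_{j,i}(-\partial_L)^i\bigl[G(x,L)^{1-sj}\bigr]
\]
collapses onto $-\gamma_1(x)\bigl[1+\bigl(1-sx\tfrac{d}{dx}\bigr)(1-G)\bigr]$, and you assert this collapse rather than prove it; you even flag it yourself as ``the main obstacle.'' Moreover the induction you sketch does not obviously close: extracting $[L^{k-1}]$ of $(-\partial_L)^i G^{1-sj}$ involves $[L^{k-1+i}]G^{1-sj}$ and hence $\gamma_m$ with $m$ up to $k-1+i>k-1$, so induction on the $L$-degree alone fails. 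One would have to organize a double induction using $\gamma_m=O(x^m)$ together with the prefactor $x^j$, $j\ge 1$, and even then one must still exhibit the identity among the $a_{j,i}$, the multinomial data from expanding $G^{1-sj}$, and the operator $1-sx\tfrac{d}{dx}$ that makes everything cancel. None of that is on the page.

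The paper does not attempt this computation. Its proof invokes the fact that any solution of \eqref{DSE} satisfies the renormalization group equation --- a nontrivial input proved in \cite{kythesis} and \cite{Ymem} either by a scaling argument or algebraically via the Dynkin operator $S\star Y$ and the underlying Hopf-algebra structure --- and then simply reads off coefficients of $L$. So the gap in your proposal is precisely the step the paper outsources: you must either carry out the collapse explicitly (a genuine calculation, not a one-line induction) or import the renormalization group equation as the paper does. As written, you have restated the proposition in an equivalent form but not established it.
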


\begin{proof}
Since $G$ satisfies \eqref{DSE} it satisfies the renormalization group equation.  Extracting coefficients of $L$ gives the result.  The result can also be proved algebraically by using the Dynkin operator $S\star Y$.  See \cite{kythesis} or \cite{Ymem} chapter 4 for details of both approaches.
\end{proof}

\section{Chord diagrams}\label{sec chord}


\begin{definition}
  \label{def:chord-diagram-alternate}
  A rooted chord diagram $D$ of size $n$ is a fixed point free involution $D \in
  S_{2n}$; that is, a permutation such that $D^2 = \text{id}_{S_{2n}}$ with
  $D(i)\neq i \text{ for all }i = 1 \ldots 2n$.
  Equivalently $D$ is a permutation that can be written as product of disjoint
  transpositions without fixed points:
  \begin{align*}
    D = (x_1 y_1) (x_2 y_2) \cdots (x_n y_n)
  \end{align*}
  where $x_1 < \ldots < x_n$ and $x_i < y_i$ for all $i = 1\ldots 2n$. Each transposition is called a chord and $(x_1 y_1)$ is called the root chord.
\end{definition}

We can visualize a rooted chord diagram as follows. Draw a
circle and mark $2n$ vertices on it. Choose a distinct vertex as the root and label that vertex $1$.
Enumerate the vertices counter-clockwise and draw a chord between vertex $x_i$ and vertex $y_i$
for each transposition. 

We say that a chord $(x_i y_i)$ crosses a chord $(x_j y_j)$ if 
\begin{align*}
  x_i < x_j < y_i < y_j \text{ or } x_{j} < x_i < y_j < y_i
\end{align*}.

\begin{definition}
Let $C$ be a rooted chord diagram.
\begin{itemize}
\item The intersection graph of $C$ is the graph with a vertex for each chord, where the vertex is labeled $i$ for chord $(x_i y_i)$, and with two vertices joined by an edge if the corresponding chords cross.
\item $C$ is connected if its intersection graph is connected
\item The terminal chords of $C$ are those chords $(x_i, y_i)$ which do not cross any chords $(x_j, y_j)$ with $i<j$.  The set of terminal chords of $C$ is denoted $\text{ter}(C)$. 
\end{itemize}
\end{definition}

\begin{example}
A rooted connected chord diagram $C$ with its intersection graph $\Gamma(C)$:
\begin{align*}
C = 
\begin{tikzpicture}[baseline={([yshift=-1.5ex]current bounding box.center)}]\tikzstyle{every node}=[font=\tiny]
\draw[black,thin](0,0) circle(0.75);
\foreach \a in {0,36,...,359}
	\draw[fill=black] (\a:0.75) circle(0.3mm);
\draw (0:0.75) circle(0.9mm);
\draw[thin,black](0:0.75) -- (144:0.75);
\draw(0:1.05) node{$1$};\draw[thin,black](36:0.75) -- (324:0.75);
\draw(36:1.05) node{$2$};\draw[thin,black](72:0.75) -- (288:0.75);
\draw(72:1.05) node{$3$};\draw[thin,black](108:0.75) -- (216:0.75);
\draw(108:1.05) node{$4$};\draw[thin,black](180:0.75) -- (252:0.75);
\draw(180:1.05) node{$5$};\end{tikzpicture}\qquad \Gamma(C) = 
\begin{tikzpicture}[baseline={([yshift=-1.5ex]current bounding box.center)}]\tikzstyle{every node}=[font=\tiny]
\foreach \a in {0,72,...,359}
	\draw[fill=black] (\a:0.75) circle(0.3mm);
\draw (0:0.75) circle(0.9mm);
\draw(0:1.05) node{$1$};\draw(72:1.05) node{$2$};\draw(144:1.05) node{$3$};\draw(216:1.05) node{$4$};\draw(288:1.05) node{$5$};\draw[thin,black](0:0.75) -- (72:0.75);
\draw[thin,black](0:0.75) -- (144:0.75);
\draw[thin,black](0:0.75) -- (216:0.75);
\draw[thin,black](216:0.75) -- (288:0.75);
\end{tikzpicture}
\end{align*}
\end{example}

We will be interested in rooted connected chord diagrams where each chord is assigned a decoration from a set of possible decorations.  Such a chord diagram is called a decorated chord diagram.  

\begin{definition}
\mbox{}
\begin{itemize}
\item Let $\mathcal{R}$ be the set of rooted connected chord diagrams.  
\item Let $\mathcal{R}_n$ be the set of rooted connected chord diagrams with $n$ chords.  
\item Let $\mathcal{R}_{\text{dec}}$ be the set of all decorated rooted connected chord diagrams with the chord decorations from a set $D$. 
\item If $c$ is a chord of a decorated chord diagram we will write $d(c)$ for the decoration of $c$.
\end{itemize}
\end{definition}

The order we need for our constructions is not the obvious counterclockwise order, but rather a different order which we define recursively as follows.

\begin{definition}[Intersection order]
  The intersection order of a rooted chord diagram $C$ is defined recursively by the following pseudo code:
\begin{verbatim}
  intersection_order(k, C) {
    m := root(C)
    label(m) := k
    k := k + 1
    if |C| != 1 then
    foreach D := connected_components(C \ m) traversed counter clockwise
    {
      intersection_order(k,D) 
      k := k + |D|
    }
  }
\end{verbatim}
and start the procedure with \begin{verbatim}intersection_order(1,C).\end{verbatim}
\end{definition}
The following picture shows a chord diagram where its chords are labeled in the intersection order.  The root chord is indicated with a dotted line.
\begin{align*}
\begin{tikzpicture}[baseline={([yshift=-2.4ex]current bounding box.center)}]\tikzstyle{every node}=[font=\small]
\draw[black,thin](0,0) circle(2);
\draw[fill=black](270:2.0) circle(0.5mm); 
\draw(270:2.0) circle(1.0mm);
\draw[fill=black](90:2.0)  circle(0.5mm); 
\draw[fill=black](80:2.0)  circle(0.5mm); 
\draw[fill=black](100:2.0)  circle(0.5mm); 
\draw[fill=black](110:2.0)  circle(0.5mm); 
\draw[fill=black](120:2.0)  circle(0.5mm); 
\draw[fill=black](130:2.0)  circle(0.5mm); 
\draw[fill=black](140:2.0)  circle(0.5mm); 
\draw[fill=black](160:2.0)  circle(0.5mm); 
\draw[fill=black](170:2.0)  circle(0.5mm); 
\draw[fill=black](180:2.0)  circle(0.5mm); 
\draw[fill=black](200:2.0)  circle(0.5mm); 
\draw[fill=black](210:2.0)  circle(0.5mm); 
\draw[fill=black](220:2.0)  circle(0.5mm); 
\draw[fill=black](230:2.0)  circle(0.5mm); 
\draw[fill=black](240:2.0)  circle(0.5mm); 
\draw[fill=black](330:2.0)  circle(0.5mm); 
\draw[fill=black](30:2.0)  circle(0.5mm); 
\draw[thin,dashed](270:2.0) -- (90:2.0); 
\draw[thin,black](330:2.0) -- (170:2.0); 
\draw[thin,black](30:2.0) -- (210:2.0); 
\draw plot [smooth] coordinates {(80:2.0) (90:1.8) (100:1.8) (110:2.0)};
\draw plot [smooth] coordinates {(100:2.0) (110:1.8) (120:1.8) (130:2.0)};
\draw plot [smooth] coordinates {(120:2.0) (130:1.8) (140:2.0)};
\draw plot [smooth] coordinates {(160:2.0) (170:1.8) (180:2.0)};
\draw plot [smooth] coordinates {(200:2.0) (210:1.8) (220:1.8) (230:2.0)};
\draw plot [smooth] coordinates {(220:2.0) (230:1.8) (240:2.0)};
\draw(270:2.4) node{1};
\draw(330:2.3) node{2};
\draw(30:2.3) node{3};
\draw(195:2.2) node{4};
\draw(218:2.2) node{5};
\draw(160:2.2) node{6};
\draw(80:2.3) node{7};
\draw(100:2.3) node{8};
\draw(120:2.3) node{9};
\end{tikzpicture}
\end{align*}

From now on always assume chords are labeled in intersection order.
\begin{definition}
Let $Ter(C) = \{ t_0 < \ldots < t_n \}$
and $d_k$ be the decoration  of the $k$-th chord, then we define:
\begin{align*}
& ||C||:= \sum_{c = 1}^{|C|} d_c\\
& {\underline {\hat a}}_{C} := \left(\prod_{c=1}^{n} a_{d_{t_{c}},t_{c}-t_{c-1}}\right) \cdot \left(\prod_{k\not\in \text{ter(C)}} a_{d_k, 0}\right)
\end{align*}
The symbol $b(C)$ denotes the label of the base chord which is defined to be the smallest terminal chord.
\end{definition}
Note that the hat notation in the above definition does not denote an operator but only that the monomial does not contain $a_{d_{b(C)},b(C)-k}$.

\begin{definition}[Insertion operation for rooted chord diagrams]
  Let $C\in \mathcal{R}(n), D\in \mathcal{R}(m)$ and let 
  \begin{align*}
    C = (x_1 y_1) \ldots (x_n y_n) \\
    D = (x'_1 y'_1) \ldots (x'_m y'_m)
  \end{align*}
their underlying permutations.
Then for each $k=1,\ldots,2m-1$ define $C\circ_k D \in \mathcal{R}(n+m)$ by the following permutation:
  \begin{align*}
   (x_1,y_1+k).. (x_n+k,y_n+k) (H_{n,k}(x'_1),H_{n,k}(y'_1)) .. (H_{n,k}(x'_m),H_{n,k}(y'_m))
  \end{align*}
where $H_{n,k}$ is defined to be:
\begin{align*}
  H_{n,k}(x) = \begin{cases} x+1 & \text{ if } x \leq k \\ x+2n & \text{otherwise}\end{cases}
\end{align*}
\end{definition}

Intuitively what this insertion operation does is put the root of $C$ immediately before the root of $D$ and then put the rest of $C$ into the $k$th interval between the ends of the chords of $D$.

$C \circ_k D$ is indeed a rooted connected chord diagram of size $n+m$, because every integer from one to $2(n+m)$ appears exactly once in the transpositions, it is fix point free and the labeling shift does not destroy any crossings but the root chord of $C$ creates at least one new crossing with chords of $D$.

\begin{example}
  \begin{align*}
  (1,4)(2,6)(3,5) \circ_2 (1,4)(2,5)(3,6) = (1,6)(2,10)(3,11)(4,8)(5,7)(9,12)   
  \end{align*}
\end{example}

\begin{remark}
  Note that the insertion operation is highly non associative and non commutative. If $C \circ_k D$ is defined, $D \circ_k C$ may not be defined. For example, if $C$ is a chord diagram with only one chord, then $D\circ_k C$ is not defined for $k\geq 2$ because there is only one insertion interval in $C$ by definition.
\end{remark}


In the following definition of the root share decomposition, we will need to decompose chord diagrams. Since by our definition chord diagrams are certain permutations, we need to define for an expression 
\begin{align*}
  A = (a_1 a_2) \ldots (a_{2n-1} a_{2n}) 
\end{align*}
where $a_k \; (k=1..2n)$ are arbitrary distinct natural numbers, the associated normalized involution $\text{norm}(A)$ by
\begin{align*}
  \text{norm}(A) = \big(\sigma(1) \sigma(2)\big) \ldots \big(\sigma(2n-1) \sigma(2n)\big)
\end{align*}
where $\sigma \in S_{2n}:$ $a_{\sigma^{-1}(1)} < a_{\sigma^{-1}(2)} \ldots < a_{\sigma^{-1}(2n)}$.

\begin{example}
  Let $A = (1 3)(2 8)(5 7)$, then $\text{norm}(A) = (1 3)(2 6)(4 5)$
\end{example}

\begin{definition}[Root share decomposition]
  Let $C\in \mathcal{R}$ with $|C| > 1$, there exists a unique $i$ such that
  \begin{align*}
    & C = C' \circ_i C'' \text{ where } C' = \text{norm}(C \setminus C_1) \text{, } C'' = \text{norm}(C_1)
  \end{align*}
 and $C_1$ is the first connected component of $C$ with the root chord removed. Note that $C\setminus C_1$ is always connected. This decomposition is called the root share decomposition of $C$.
\end{definition}

\section{Insertion trees and weights}\label{sec tree}

We will now need to associate a binary rooted tree to each rooted connected chord diagram.  The leaves of the tree will correspond to the chords and will be labeled accordingly.  This same construction was used in \cite{MYchord}, however, there the construction appeared to only be a technical tool to prove a certain recurrence.  In the more generalized case discussed here, the tree is used to define the weight with which each chord diagram contributes to the Green function.  Thus, the rooted trees are not merely technical, but actually capture a key part of the structure which we need in the general case.

\begin{definition}[Insertion operation on rooted plane trees]\label{def:instree}
  Let $T,T'$ be rooted plane trees with a virtual edge above the root.  Label the virtual edge 1 and label the remaining edges following a pre-order traversal. The rooted plane tree $T\circ_kT'$ is defined by 
\begin{itemize}
  \item putting a new vertex in the middle of edge $k$ of $T'$, 
  \item placing $T$ as the left subtree rooted at this vertex and 
  \item placing the subtree of $T'$ rooted at the bottom end of $k$ as the right subtree of the new vertex.
\end{itemize}
\end{definition}

\begin{example}
Let $S = \begin{tikzpicture}[baseline={([yshift=0.0ex]current bounding box.center)},level/.style={sibling distance=7mm,level distance=7mm}, level 2/.style={sibling distance=7mm}]
\tikzstyle{every node}=[font=\small]  
  \coordinate
 child{node{}} child{ child{node{}} child{node{}} } ;
\end{tikzpicture}$ and $T = \begin{tikzpicture}[baseline={([yshift=0.0ex]current bounding box.center)},level/.style={sibling distance=7mm,level distance=7mm}, level 2/.style={sibling distance=7mm}]
\tikzstyle{every node}=[font=\small]  
  \coordinate
  child{node{ }} child{node{ }};
\end{tikzpicture}$ then 
 $S\circ_2 T=  \begin{tikzpicture}[baseline={([yshift=0.4ex]current bounding box.center)},level/.style={sibling distance=7mm,level distance=5mm}, level 2/.style={sibling distance=5mm}]
\tikzstyle{every node}=[font=\tiny]
\coordinate
 child{ child{ child{node{ }}child{ child{node{ }}child{node{ }}}}child{node{ }}}child{node{ }};\end{tikzpicture}$
\end{example}

Recall that chord diagrams are labeled by the intersection order.
\begin{definition}[The binary tree $T(C)$ associated to $C$]
  Let $C = C_1 \circ_k C_2$ decomposed by the root share decomposition and let $C_1$, $C_2$ be labeled by the induced labeling of $C$. Then $T(C)$ is defined recursively by 
  \begin{align*}
    T(C) = \begin{cases} \text{one vertex labeled by l} & \text{if } |C|=1 \text{ and is labeled by } l \text{ in the induced labeling} \\
   T(C_1) \circ_k T(C_2) & \text{when } C = C_{1} \circ_k C_2
   \end{cases}
  \end{align*}
\end{definition}

\begin{definition}
Let $v$ be a leaf of a binary rooted tree.  Consider the path beginning at $v$ and moving up and to the left as long as such an edge exists.  Define $\nu_v$ to be the number of edges in this path.
\end{definition}

Note that if a leaf is a left child then $\nu_v=0$.  For a more precise definition of $\nu_v$ using the binary string representation of a binary rooted tree see \cite{Hphd}.

\begin{definition}[Branch-left vector]
  \label{def:branch-left-vector}
  Let $C$ be a rooted, connected chord diagram of size $n$ and $T(C)$ its corresponding unique insertion tree, then $\nu(C) = (\nu_1,\ldots,\nu_n)$ is said to be the branch-left vector of $C$.
\end{definition}



Now, we can define the weight mentioned in the introduction of this section. 
\begin{definition}[The weight of a decorated rooted connected chord diagram]\label{def:weight}
  For a chord diagram $C\in R_{dec}$ with branch-left vector $\nu(C)$ and decoration $d_i$ for chord $i$ define 
  \begin{align*}
    \omega(C) = \prod_{k=1}^{|C|} { d_ks + \nu_k(C) -2 \choose \nu_k(C) }
  \end{align*}
  where $s$ is the parameter given by our Dyson-Schwinger equation.  
Further, we denote by $\omega_{\ahat{C}}$ the weight associated to $C$ but without the factor corresponding to its base chord $b(C)$.
\end{definition}

Whenever convenient we will write $\omega_C$ instead of $\omega(C)$.


Our first use of the weights will be to show that the renormalization group equation holds for the expansion over chord diagrams which will solve the Dyson-Schwinger equation. 
To keep track of the two expansions -- the expansion of the Dyson-Schwinger equation itself and the chord diagram expansion -- which will ultimately prove to be the same, we will distinguish them by superscripts: $\text{comb}$ for the combinatorial expansion, $\text{dif}$ for the analytic expansion.  Specifically, Let
\[
 G^{\text{dif}}(x,L) = 1 - \sum_{k = 0}^N x^kG^{\text{dif}}(x,\partial_{-\rho = 0})^{1-sk} (e^{-L\rho}-1)F_k(\rho)
\]
and write
\[
 g^{\text{dif}}_{k}(x) =  \frac{(-1)^k}{k!}[L^k]G^{\text{dif}}(x,L) 
\]
On the combinatorial side define
\[
  g^{\text{comb}}_k(x) = \sum_{\substack{C\in \mathcal R^{dec}\\b(C) \geq k}} x^{||C||} \omega_C \ahat{C} a_{d_{b(C)},b(C)-k}
\]
and $G^{\text{comb}}$ as the analogous sum of the $g^{\text{comb}}$.

By Proposition \ref{prop:diff rg eqn} we already know $g^{\text{dif}}$ satisfies the renormalization group equation.  Next we show the analogous result for $g^{\text{comb}}$.
\begin{thm}[Renormalization group equation for $g^{\text{comb}}$]
  \label{thm:rge-equation}
  \begin{align*}
    g^{\text{comb}}_k(x) = g^{\text{comb}}_1(x)\cdot(sx\partial_x - 1)g^{\text{comb}}_{k-1}(x)
  \end{align*}
\end{thm}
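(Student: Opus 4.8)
The plan is to prove the identity by means of the root share decomposition, turning the single sum defining $g^{\text{comb}}_k$ into a double sum that matches the product on the right. Every decorated diagram $C$ contributing to $g^{\text{comb}}_k$ has $b(C)\geq k\geq 2$ and hence $|C|>1$, so it decomposes uniquely as $C = C'\circ_i C''$, where $C''$ is the normalization of the first connected component of $C$ with its root chord removed and $C'$ is the normalization of the rest. Running this in reverse, I would sum over all pairs $(C',C'')$ and over all insertion positions $i\in\{1,\dots,2|C''|-1\}$; the uniqueness of the decomposition established above guarantees that this is a bijection onto the diagrams indexing $g^{\text{comb}}_k$. The aim is then to verify that, under this reindexing, the summand of $g^{\text{comb}}_k$ splits as the $g^{\text{comb}}_1$-summand of $C'$ times the $(sx\partial_x-1)g^{\text{comb}}_{k-1}$-summand of $C''$.

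First I would treat the $x$-degree and the monomial $\ahat{C}\,a_{d_{b(C)},b(C)-k}$ and show that both factor independently of $i$. Since decorations are additive we have $\|C\|=\|C'\|+\|C''\|$, so $x^{\|C\|}$ splits and the operator $sx\partial_x$ contributes exactly the scalar $s\|C''\|$ to the $C''$ factor. For the monomial the key point is that, because $C''$ is the first connected component, the intersection order of $C$ assigns it the contiguous block of labels $2,\dots,|C''|+1$; thus its induced labeling is that of $C''$ shifted by one, independently of $i$, while $C'$ retains its own order with its non-root labels pushed past this block. The terminal chords of $C$ are then those of $C''$, shifted by one, followed by those of $C'$, the root chord of $C'$ being non-terminal in $C$. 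Using that the largest label of $C''$, namely $|C''|$, is always terminal, the gap straddling the two blocks collapses to exactly $b(C')-1$, reproducing the base factor $a_{d_{b(C')},b(C')-1}$ of the $g^{\text{comb}}_1$-summand, while the lowest gap $b(C)=b(C'')+1$ reproduces $a_{d_{b(C'')},b(C'')-(k-1)}$ after the shift by $k$. All remaining gaps and the $a_{\cdot,0}$ factors of the non-terminal chords match termwise, so the monomial factors cleanly and the condition $b(C)\geq k$ becomes $b(C'')\geq k-1$, which is precisely the range of the $g^{\text{comb}}_{k-1}$ sum.

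Because the monomial does not depend on $i$, the entire statement reduces to the weights, and the theorem comes down to the identity
\[
\sum_{i=1}^{2|C''|-1}\omega(C'\circ_i C'') = (s\|C''\|-1)\,\omega(C')\,\omega(C'').
\]
Here I would first argue that $\omega(C')$ factors out: in $T(C'\circ_i C'')=T(C')\circ_i T(C'')$ the subtree $T(C')$ is attached as a left child, so every leaf of $C'$ terminates its up-and-left path at the root of $T(C')$ exactly as in $T(C')$ alone, leaving its branch-left number unchanged. What remains is an intrinsic statement about the single tree $T(C'')$: inserting a new vertex carrying a left child at each of its $2|C''|-1$ edge positions perturbs the branch-left numbers of the leaves of $C''$, and summing the resulting products of binomials $\binom{d_c s+\nu_c-2}{\nu_c}$ over all positions should return $(s\|C''\|-1)$ times the unperturbed product.

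I expect this last identity to be the main obstacle. Pinning down how each $\nu_c$ changes as the new vertex moves through the edges of $T(C'')$ requires the analysis of the two subtrees at the root furnished by the diamond decomposition, and once the change in each $\nu_c$ is known the summation over positions should collapse by a Chu--Vandermonde type convolution, the linear factor $s\|C''\|-1$ emerging from $\sum_c d_c s = s\|C''\|$ appearing in the binomials together with the $-2$ shift in their upper arguments. The one point demanding care is the virtual-edge convention at the root of each tree, which governs the off-by-one in the branch-left count and hence whether the constant in the linear factor is $-1$ as claimed; I would fix this convention at the outset and check it against the one-chord base case.
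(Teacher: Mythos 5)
Your overall strategy is exactly the paper's: reduce the identity, via the root share decomposition $C=C'\circ_i C''$, to (i) a factorization of the monomial $\ahat{C}\,a_{d_{b(C)},b(C)-k}$ into the $C'$-summand of $g^{\text{comb}}_1$ and the $C''$-summand of $g^{\text{comb}}_{k-1}$, and (ii) the weight identity $\sum_{i=1}^{2|C''|-1}\omega(C'\circ_i C'')=(s\lVert C''\rVert-1)\,\omega(C')\,\omega(C'')$. Your treatment of (i) is correct and in fact more detailed than the paper's, which delegates it to Lemma~\ref{lem:rsd-monomial} (proved by reference to \cite{MYchord}): the first connected component does receive the contiguous label block $2,\dots,|C''|+1$, one does have $b(C)=b(C'')+1$, the cross-block gap is $b(C')-1$, and $\omega(C')$ factors out because $T(C')$ is attached as a left child, so the up-and-left paths of its leaves still terminate at its root.

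The genuine gap is (ii), which you explicitly leave as ``the main obstacle'' with only a conjectural route, and the route you sketch is not the right one: no Chu--Vandermonde convolution occurs, and the diamond decomposition is irrelevant here (it enters only later, in Section~\ref{sec:diamond}, for a different purpose). The actual mechanism (Lemma~\ref{lem:weight-reccurence}) is local and elementary. Subdividing edge $i$ of $T(C'')$ and hanging $T(C')$ there as a left child increments the branch-left number of exactly one leaf of $T(C'')$ by exactly $1$ and leaves all others unchanged; leaf $c$ receives this increment from precisely the $\nu_c$ edges of its up-and-left path together with the single edge just above that path's endpoint (the virtual edge when the path reaches the root --- this is where your worry about the root convention is resolved), i.e.\ from $\nu_c+1$ of the positions, and these sets partition all $2|C''|-1$ positions since $\sum_c\nu_c=|C''|-1$. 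Because $\binom{d_cs+\nu_c-1}{\nu_c+1}=\frac{d_cs+\nu_c-1}{\nu_c+1}\binom{d_cs+\nu_c-2}{\nu_c}$, each such increment rescales $\omega(C'')$ by $\frac{d_cs+\nu_c-1}{\nu_c+1}$, so the sum over positions equals $\omega(C'')\sum_c(\nu_c+1)\frac{d_cs+\nu_c-1}{\nu_c+1}=\omega(C'')\sum_c(d_cs+\nu_c-1)=\omega(C'')(s\lVert C''\rVert-1)$. Without this argument the proof is incomplete at its central step, since everything else in your reduction hinges on it.
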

To prove this theorem we need to answer the following two questions:
\begin{enumerate}
\item How is the monomial of a chord diagram $C$ recovered from the root share decomposition $C= C_1 \circ_r C_2$?
\item How is the weight of a chord diagram $C$ recovered from the root share decomposition $C= C_1 \circ_r C_2$?
\end{enumerate}
The following two lemmas answer these questions and together they are enough to prove Theorem \ref{thm:rge-equation}.
The monomial associated to a decorated chord diagram $C\in \mathcal R_{\text{dec}}$ with root share decomposition 
$C = C_1 \circ_k C_2$ can be reconstructed from $C_1, C_2$ in the following sense:
\begin{lemma}[RSD monomial Lemma]
  \label{lem:rsd-monomial}
   Let $C_1,C_2\in \mathcal R_{dec}$ with $C = C_1 \circ_k C_2$ and $d, d_{1}, d_2$ the corresponding decorations of the base chords,i.e. $d:= d(b(C)), d_1:=d(b(C_1)), d_2 := d(b(C_2))$, then 
   \begin{align*}
     \ahat{C} a_{d,b(C)-l} = \ahat{C_1} a_{d_1,b(C_1)-1} \ahat{C_2} a_{d_2,b(C_2)-l+1}
   \end{align*}
   where $1 < l < b(C)$
\end{lemma}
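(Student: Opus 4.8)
The plan is to regard each side of the claimed identity as a product containing exactly one factor $a_{d_c,\,\ast}$ for every chord $c$, and to match these factors one chord at a time under the label correspondence induced by the root share decomposition $C=C_1\circ_k C_2$. First I would record how the intersection-order labels of $C$ are inherited from those of $C_1$ and $C_2$. Since the root of $C$ is the root of $C_1$, and the first connected component traversed after the root (the one labelled $2,\dots,|C_2|+1$) is exactly $C_2$, the labels satisfy: the root keeps label $1$; a chord with label $m$ in $C_2$ acquires label $m+1$ in $C$; and a non-root chord with label $\ell$ in $C_1$ acquires label $\ell+|C_2|$. Decorations are carried unchanged by the decorated insertion, so the first index of every factor $a_{d_c,\ast}$ is automatically correct, and the entire content of the lemma lies in matching the second indices.

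Next I would analyze the terminal chords. Because the non-root chords of $C_1$ are inserted into a single interval of $C_2$, they are nested there and cross no chord of $C_2$; only the root of $C_1$ crosses chords of $C_2$. Hence the property ``crosses a higher-labelled chord'' is internal to each piece, so a chord of $C_2$ is terminal in $C$ iff it is terminal in $C_2$, a non-root chord of $C_1$ is terminal in $C$ iff it is terminal in $C_1$, and the root is terminal in neither (by connectedness it crosses a higher chord). Comparing the label ranges $\{2,\dots,|C_2|+1\}$ for $C_2$ against $\{|C_2|+2,\dots,|C|\}$ for the non-root part of $C_1$ shows that every terminal chord coming from $C_2$ precedes every terminal chord coming from $C_1$. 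In particular $b(C)$ is the base chord of $C_2$, so $b(C)=b(C_2)+1$ and $d=d_2$, while $b(C_1)\ge 2$ since the root is never terminal.

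With this in hand I would compute the gaps. Writing $\mathrm{ter}(C_2)=\{s_0<\dots<s_p\}$ and $\mathrm{ter}(C_1)=\{r_0<\dots<r_q\}$, the sorted terminal chords of $C$ are $s_0+1,\dots,s_p+1,\,r_0+|C_2|,\dots,r_q+|C_2|$. Gaps internal to the $C_2$-block equal $s_c-s_{c-1}$ and gaps internal to the $C_1$-block equal $r_j-r_{j-1}$, reproducing the terminal factors of $C_2$ and of $C_1$ respectively, while the base factor of $C$ is $a_{d,\,b(C)-l}=a_{d_2,\,b(C_2)-l+1}$, exactly the $C_2$-factor on the right. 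The one delicate point is the junction gap $(r_0+|C_2|)-(s_p+1)$: since the largest-labelled chord of any diagram is terminal, $s_p=|C_2|$, and this gap collapses to $r_0-1=b(C_1)-1$, producing precisely the base factor $a_{d_1,\,b(C_1)-1}$ of $C_1$. The remaining factors are the $a_{d_c,0}$ of the non-terminal chords, and since the non-terminal chords of $C$ are the disjoint union of those of $C_1$ (the root included) and those of $C_2$, these split cleanly. Collecting terms yields $\ahat{C_1}a_{d_1,b(C_1)-1}\cdot\ahat{C_2}a_{d_2,b(C_2)-l+1}$.

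The hard part will be the junction computation together with the fact that it forces $C_1$ to be read as a $g^{\text{comb}}_1$-type monomial: one must verify that the largest chord of $C_2$ is terminal with label exactly $|C_2|$, so that the step from the last $C_2$-terminal to the first $C_1$-terminal is $b(C_1)-1$ rather than something involving $|C_2|$. This is also where the hypothesis $1<l$ enters, since it guarantees that $b(C_2)-l+1$ is a legitimate second index (equivalently that $C_2$ is read as a $g^{\text{comb}}_{l-1}$ monomial with $l-1\ge 1$), while $l<b(C)$ keeps the base factor of $C$ well defined. Establishing the label and terminal-chord correspondences of the first two paragraphs rigorously, most cleanly through the insertion-tree picture $T(C)=T(C_1)\circ_k T(C_2)$, is routine but is the step demanding the most care.
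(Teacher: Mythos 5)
Your overall strategy is sound and is, in substance, the argument the paper delegates to Lemma~4.2 of \cite{MYchord} (the paper's own ``proof'' is just that citation plus the remark that decorations ride along): transport the intersection-order labels, the terminal chords, and the gaps between consecutive terminal chords through $C=C_1\circ_k C_2$, and observe that the junction gap between the last terminal chord coming from $C_2$ and the first coming from $C_1$ collapses to $b(C_1)-1$ because the top-labelled chord of $C_2$ is terminal with label $|C_2|$. Your label correspondence, the observation that only the root of $C_1$ crosses chords of $C_2$, the identifications $b(C)=b(C_2)+1$ and $d=d_2$, and the clean splitting of the non-terminal factors are all correct for $|C_1|\geq 2$.

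There is, however, a genuine gap at $|C_1|=1$, which is not an exotic case --- it is exactly the paper's worked Example~\ref{ex:wheel-example}, the wheel with three spokes written as a single chord inserted into a two-chord diagram. When $|C_1|=1$ the unique chord of $C_1$ is (vacuously) terminal in $C_1$, so your assertions that ``the root is terminal in neither'' and that ``$b(C_1)\geq 2$'' are false, and your sorted list of terminal chords of $C$ acquires a spurious entry $r_0+|C_2|=1+|C_2|=s_p+1$, duplicating the last terminal chord of $C_2$; the root of $C$ is in fact non-terminal in $C$ even though its preimage is terminal in $C_1$. The identity itself still holds in this case, because $\ahat{C_1}=1$ and $a_{d_1,b(C_1)-1}=a_{d_1,0}$, which is precisely the factor the root contributes to $\ahat{C}$ as a \emph{non-terminal} chord; but this must be checked separately rather than read off from your junction computation. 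The clean repair is to split into the cases $|C_1|=1$ and $|C_1|\geq 2$: only in the latter does $\mathrm{ter}(C_1)$ transport bijectively onto the $C_1$-part of $\mathrm{ter}(C)$ via $\ell\mapsto\ell+|C_2|$, while in the former the whole of $C_1$'s contribution to the right-hand side reduces to the single non-terminal factor $a_{d_1,0}$ of the root. With that case distinction added, your proof is complete.
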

\begin{proof}
The proof is the same as the proof of Lemma 4.2 in \cite{MYchord} but keeping track of decorations.
\end{proof}
\begin{example}\label{ex:wheel-example}
 Let $C$ be the wheel with three spokes as a rooted connected chord diagram and choose as decoration a two for the last chord (only the non-trivial decoration is included in the pictures):
\begin{align*}
{\tiny
\begin{tikzpicture}[baseline={([yshift=-2.0ex]current bounding box.center)}]
\draw[black,thin](0,0) circle(0.5);
\foreach \a in {0,60,...,359}
	\draw[fill=black] (\a:0.5) circle(0.3mm);
\draw (0:0.5) circle(0.9mm);
\draw[thin,black](0:0.5) -- (180:0.5);
\draw[thin,black](60:0.5) -- (240:0.5);
;\draw[thin,black](120:0.5) -- (300:0.5);
\draw(120:0.75) node{$2$};\end{tikzpicture}}
\end{align*}
As calculated earlier $C = C_1 \circ_2 C_2$
 \begin{align*}
{\tiny
\begin{tikzpicture}[baseline={([yshift=-2.4ex]current bounding box.center)}]
\draw[black,thin](0,0) circle(0.5);
\foreach \a in {0,60,...,359}
	\draw[fill=black] (\a:0.5) circle(0.3mm);
\draw (0:0.5) circle(0.9mm);
\draw[thin,black](0:0.5) -- (180:0.5);
\draw[thin,black](60:0.5) -- (240:0.5);
;\draw[thin,black](120:0.5) -- (300:0.5);
\draw(120:0.75) node{$2$};\end{tikzpicture}}
= 
   {\tiny
\begin{tikzpicture}[baseline={([yshift=-1.4ex]current bounding box.center)}]
\draw[black,thin](0,0) circle(0.5);
\foreach \a in {0,180,...,359}
	\draw[fill=black] (\a:0.5) circle(0.3mm);
\draw (0:0.5) circle(0.9mm);
\draw[thin,black](0:0.5) -- (180:0.5);
\end{tikzpicture}
} \circ_{2} {\tiny
\begin{tikzpicture}[baseline={([yshift=-2.4ex]current bounding box.center)}]
\draw[black,thin](0,0) circle(0.5);
\foreach \a in {0,90,...,359}
	\draw[fill=black] (\a:0.5) circle(0.3mm);
\draw (0:0.5) circle(0.9mm);
\draw[thin,black](0:0.5) -- (180:0.5);
;\draw[thin,black](90:0.5) -- (270:0.5);
\draw(90:0.75) node{$2$};\end{tikzpicture}
}
 \end{align*}
The left hand side of the previous Lemma is: 
\begin{align*}
\ahat{C} a_{d,b(C)-l} = a_{1,0}^2 a_{2,3-l}
\end{align*}
The right hand side of the previous Lemma contains
\begin{align*}
   \ahat{C_1} = 1\\
   a_{d_1,b(C_1)-l} = a_{1,0}\\
   \ahat{C_2} = a_{1,0}\\
   a_{d_2,b(C_2)-l+1} = a_{2,2-l+1}
\end{align*}
which agrees with the lemma.
\end{example}

Returning to the general case, 
we need to understand how the root share decomposition relates to the branch left vectors.
Consider a chord diagram and its root share decomposition $C = C' \circ_k C''$. We know that branch-left vector $\nu(C')$ is copied into $C$ so only the branch left vector of $C''$ is modified. This yields the following equation:
\begin{align*}
\sum_{k=1}^{2n-1} \omega( C' \circ_k C'' ) =  \omega(C') \sum_{k=1}^{2n-1}\omega(^{\circ_k}C'')
\end{align*}
where $^{\circ_k}$ is defined as follows:
\begin{definition}[Virtual insertion $^{\circ_k}$]
  \label{def:virtual-insertion}
  Let $C\in \mathcal R_{dec}$ then $^{\circ_k} C$ is defined to be the same chord diagram but with modified tree: $T(^{\circ_k} C)$ is $T(C)$ but with an additional vertex $v$ and an additional left child  inserted before the $k$-th vertex $w$. As a result $w$ will be the right child of $v$.
\end{definition}

The following example illustrates this definition.

\begin{example}
  Let $C' =
{\tiny
\begin{tikzpicture}[baseline={([yshift=-2.4ex]current bounding box.center)}]
\draw[black,thin](0,0) circle(0.5);
\foreach \a in {0,90,...,359}
	\draw[fill=black] (\a:0.5) circle(0.3mm);
\draw (270:0.5) circle(0.9mm);
\draw[thin,black](0:0.5) -- (180:0.5);
;\draw[thin,black](90:0.5) -- (270:0.5);
\end{tikzpicture}
}
 $ and $C''  =  \begin{tikzpicture}[baseline={([yshift=-1.4ex]current bounding box.center)}]
\draw[black,thin](0,0) circle(0.5);
\foreach \a in {0,60,...,359}
	\draw[fill=black] (\a:0.5) circle(0.3mm);
\draw (300:0.5) circle(0.9mm);
\draw[thin,black](0:0.5) -- (240:0.5);
\draw[thin,black](60:0.5) -- (180:0.5);
\draw[thin,black](120:0.5) -- (300:0.5);
\end{tikzpicture}
 $ and the associated trees with the induced labeling are: $T_1 := T(C') = \begin{tikzpicture}[baseline={([yshift=0.0ex]current bounding box.center)},level/.style={sibling distance=7mm,level distance=7mm}, level 2/.style={sibling distance=7mm}]
\tikzstyle{every node}=[font=\small]  
  \coordinate
  child{node{$1$}} child{node{$5$}};
\end{tikzpicture}$ and 
$T_2 := T(C'') = \begin{tikzpicture}[baseline={([yshift=0.0ex]current bounding box.center)},level/.style={sibling distance=7mm,level distance=7mm}, level 2/.style={sibling distance=7mm}]
\tikzstyle{every node}=[font=\small]  
  \coordinate
 child{ child{node{$2$}} child{node{$4$}} } child{node{3}};
\end{tikzpicture}$.
\begin{align*}
 &  T_1 \circ_1 T_2 =
\begin{tikzpicture}[baseline={([yshift=0.0ex]current bounding box.center)},level/.style={sibling distance=10mm,level distance=5mm}, level 2/.style={sibling distance=7mm}]
\tikzstyle{every node}=[font=\tiny]  
  \coordinate
child{child{node{$1$}} child{node{$5$}}} child{child{ child{node{$2$}} child{node{$4$}} } child{node{3}}};
\end{tikzpicture}
 \;\;\;\;
  T_1 \circ_2 T_2 = 
\begin{tikzpicture}[baseline={([yshift=0.0ex]current bounding box.center)},level/.style={sibling distance=10mm,level distance=3mm}, level 2/.style={level distance=5mm,sibling distance=20mm}]
\tikzstyle{every node}=[font=\tiny]  
  \coordinate
child{ child{ child{node{1}} child{node{5}} } child{ child{node{2}} child{node{4}} } } child{node{3}};
\end{tikzpicture}
\\
 & T_1 \circ_3 T_2 = 
\begin{tikzpicture}[baseline={([yshift=0.0ex]current bounding box.center)},level/.style={sibling distance=7mm,level distance=5mm}, level 2/.style={sibling distance=7mm}]
\tikzstyle{every node}=[font=\tiny]  
  \coordinate
child{ child{ child{child{node{1}} child{node{5}}} child{node{2}}} child{node{4}}} child{node{3}};
\end{tikzpicture}
 \;\;\;\;
  T_1 \circ_4 T_2 =
\begin{tikzpicture}[baseline={([yshift=0.0ex]current bounding box.center)},level/.style={sibling distance=10mm,level distance=5mm}, level 2/.style={sibling distance=7mm}]
\tikzstyle{every node}=[font=\tiny]  
  \coordinate
child{child{node{2}} child{child{child{node{1}} child{node{5}}} child{node{4}}} } child{node{3}};
\end{tikzpicture}
\\
 & T_1 \circ_5 T_2 = 
\begin{tikzpicture}[baseline={([yshift=0.0ex]current bounding box.center)},level/.style={sibling distance=10mm,level distance=5mm}, level 2/.style={sibling distance=7mm}]
\tikzstyle{every node}=[font=\tiny]  
  \coordinate
child{ child{node{2}} child{node{4}} } child{ child{child{node{1}} child{node{5}} } child{node{3}} };
\end{tikzpicture}
\\
\end{align*}
Note that the branch left vector of $C'$ is never changed, so we can replace it as a marker vertex. This is what the virtual insertion does:
\begin{align*}
  T(^{\circ_2}C'') = \begin{tikzpicture}[baseline={([yshift=0.0ex]current bounding box.center)},level/.style={sibling distance=7mm,level distance=7mm}, level 2/.style={sibling distance=7mm}]
\tikzstyle{every node}=[font=\small]  
  \coordinate
child{ child{} child{ child{node{$2$}} child{node{$4$}} }} child{node{3}};
\end{tikzpicture}
\end{align*}
\end{example}

\begin{lemma}\label{lem:weight-reccurence}
  Let $C',C''$ be decorated chord diagrams where $|C''| = n$, then:
  \begin{align*}
    \sum_{k=1}^{2n-1} \omega( C' \circ_k C'' ) = \omega(C')\omega(C'')( s\lVert C''\rVert -1 )
  \end{align*}
\end{lemma}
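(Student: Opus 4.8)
The plan is to start from the factorization stated just before Definition~\ref{def:virtual-insertion},
\[
\sum_{k=1}^{2n-1} \omega(C' \circ_k C'') = \omega(C') \sum_{k=1}^{2n-1} \omega(^{\circ_k}C''),
\]
which holds because the insertion copies $T(C')$ in unchanged as the left subtree at the new vertex: the root of $T(C')$ becomes a left child, so every up-left path of a leaf of $C'$ terminates exactly where it did in $T(C')$ alone, and the branch-left values (and decorations) of the chords of $C'$ factor out untouched. It therefore suffices to establish the core identity $\sum_{k=1}^{2n-1}\omega(^{\circ_k}C'') = \omega(C'')\,(s\lVert C''\rVert - 1)$, after which the lemma follows immediately.

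The main work is to track how the virtual insertion $^{\circ_k}$ perturbs the branch-left vector. Labeling the edges of $T(C'')$ in pre-order as in Definition~\ref{def:instree}, let $w_k$ be the vertex at the lower end of edge $k$; as $k$ runs over $1,\ldots,2n-1$ the vertex $w_k$ runs over all $2n-1$ vertices of $T(C'')$ (edge $1$ being the virtual edge above the root). The virtual insertion places a new vertex $v$ just above $w_k$, makes $w_k$ the right child of $v$, and attaches a new left-child leaf to $v$; this marker leaf is not a chord and so contributes no factor. I would argue that among the genuine leaves exactly one has its branch-left value changed, namely the leaf $r(w_k)$ reached by descending to the right from $w_k$. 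Indeed a leaf $\ell$ has $w_k$ on its up-left path precisely when $\ell = r(w_k)$, and for that leaf the new edge $w_k \to v$ adds exactly one up-left step, while the part of the path above $v$ is identical to the old part above $w_k$ because $v$ inherits the left/right status of $w_k$. Hence $\nu_{r(w_k)}$ increases by exactly $1$ and all other branch-left values are unchanged.

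I would then regroup the sum by the affected leaf. A vertex $w$ satisfies $r(w)=\ell$ exactly when $w$ lies on the up-left path of $\ell$, and that path has $\nu_\ell + 1$ vertices, so each leaf $\ell$ is the affected leaf for precisely $\nu_\ell + 1$ of the indices. Using Definition~\ref{def:weight} together with the ratio
\[
\frac{\binom{d_\ell s + \nu_\ell - 1}{\nu_\ell + 1}}{\binom{d_\ell s + \nu_\ell - 2}{\nu_\ell}} = \frac{d_\ell s + \nu_\ell - 1}{\nu_\ell + 1},
\]
each block of $\nu_\ell + 1$ terms contributes $\omega(C'')\,(d_\ell s + \nu_\ell - 1)$, giving
\[
\sum_{k=1}^{2n-1}\omega(^{\circ_k}C'') = \omega(C'')\sum_{\ell=1}^{n}\bigl(d_\ell s + \nu_\ell - 1\bigr).
\]
Finally $\sum_\ell d_\ell = \lVert C''\rVert$ and $\sum_\ell 1 = n$, while $\sum_\ell \nu_\ell$ counts the edges of $T(C'')$ whose lower endpoint is a right child; a full binary tree with $n$ leaves has $n-1$ internal vertices and hence exactly $n-1$ such edges, so $\sum_\ell \nu_\ell = n-1$. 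Thus $\sum_\ell(d_\ell s + \nu_\ell - 1) = s\lVert C''\rVert + (n-1) - n = s\lVert C''\rVert - 1$, which is the desired identity.

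The step I expect to be the main obstacle is the branch-left analysis: checking rigorously that $^{\circ_k}$ alters exactly one genuine leaf's $\nu$-value, by exactly one, and identifying that leaf as $r(w_k)$. Everything downstream — the factoring out of $\omega(C')$, the binomial simplification, and the edge count $\sum_\ell \nu_\ell = n-1$ — is routine once this local picture is secured, and it may be cleanest to verify the perturbation claim via the binary-string description of $\nu$ referenced after the definition of $\nu_v$.
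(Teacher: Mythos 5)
Your proposal is correct and follows essentially the same route as the paper: factor out $\omega(C')$ since the insertion leaves $T(C')$ and hence its branch-left values intact, observe that each leaf $\ell$ of $C''$ has its $\nu$-value incremented by exactly one for precisely $\nu_\ell+1$ of the $2n-1$ insertion positions, and then sum the resulting binomial ratios using $\sum_\ell \nu_\ell = n-1$. The paper simply asserts the ``$\nu_k+1$ possibilities'' count where you verify it by identifying the affected leaf as the fully-right descendant of the insertion vertex; that is a welcome elaboration, not a different argument.
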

\begin{proof}
Note that $C' \circ_k C''$ does not affect the tree form of $C'$ in any way so we get
  \begin{align*}
    \sum_{k=1}^{2n-1} \omega( C' \circ_k C'' ) = \omega(C') \sum_{k=1}^{2n-1} \omega(^{\circ_k}C'')
  \end{align*}
Now notice that there are $\nu_k+1$ possibilities to increase the left branch by 1.
\begin{align*}
  \sum_{k=1}^{2n-1} \omega(^{\circ_k}C'') & = (\nu_1+1) \omega^{+1}(C'') + \ldots + (\nu_n+1) \omega^{+n}(C'')
\end{align*}
where $\omega^{+k}(C'')$ is defined as the weight of $C''$ after incrementing the $k$-th component of the branch left vector:
\begin{align*}
  \omega^{+k}(C'') = \omega(C'')\left(1+\frac{sd_k-2}{\nu_k+1}\right)
\end{align*}
Plugging this into the latter equation, we get the result:
\begin{align*}
  \sum_{k=1}^{2n-1} \omega(^{\circ_k}C'') & = \omega(C'') \sum_{k=1}^n (\nu_k + 1 + sd_k - 2)\\
  & = \omega(C'')\left( n-1 - n + s\sum_{k=1}^n d_k\right) = \omega(C'')(s\lVert C'' \rVert - 1)
\end{align*}
%
\end{proof}
\begin{example}Consider the following decorated chord diagrams (the decorated chords are thickened and the decoration is on the other side from the labeling).
  \begin{align*}
  C' = 
  {\tiny
\begin{tikzpicture}[baseline={([yshift=-2.4ex]current bounding box.center)}]
\draw[black,thin](0,0) circle(0.5);
\foreach \a in {0,90,...,359}
	\draw[fill=black] (\a:0.5) circle(0.3mm);
\draw (0:0.5) circle(0.9mm);
\draw[thin,black](0:0.5) -- (180:0.5);
\draw(0:0.75) node{$1$};\draw[ultra thick,black](90:0.5) -- (270:0.5);
\draw(90:0.75) node{$2$};
\draw(270:0.75) node{$d_1$};
\end{tikzpicture}
} \qquad 
  C'' = 
  {\tiny
\begin{tikzpicture}[baseline={([yshift=-2.4ex]current bounding box.center)}]
\draw[black,thin](0,0) circle(0.5);
\foreach \a in {0,90,...,359}
	\draw[fill=black] (\a:0.5) circle(0.3mm);
\draw (0:0.5) circle(0.9mm);
\draw[thin,black](0:0.5) -- (180:0.5);
\draw(0:0.75) node{$1$};\draw[ultra thick,black](90:0.5) -- (270:0.5);
\draw(90:0.75) node{$2$};
\draw(270:0.75) node{$d_2$};
\end{tikzpicture}
} 
\end{align*}
Clearly, $\nu(C') = \nu(C'') = (0,1)$  and thus we have
\begin{align*}
  & \omega(C') = d_1s-1\qquad \omega(C'') = d_2s-1 \\
  & \omega(C')\omega(C'')(s||C''||-1) = (d_1s-1)(d_2s-1)(s(d_2+1)-1)
\end{align*}
The branch-left vectors for the different insertions are:
\begin{align*}
  & \nu(C' \circ_1 C'') = (0,1,0,2) \qquad  \nu(C' \circ_2 C'') = (0,1,1,1) \qquad \nu(C' \circ_3 C'') = (0,1,0,2)
\end{align*}
Thus, for the the sum of the left hand side of the lemma we have: 
\begin{align*}
  & \omega(C' \circ_1 C'') = \omega(C' \circ_3 C'') = {d_2s \choose 2} (d_{1}s - 1) \\
  & \omega(C' \circ_2 C'') = (d_1s-1)(d_2s-1)(s-1)
\end{align*}
\end{example}
which is what the Lemma~\ref{lem:weight-reccurence} tells us it should be.

\begin{proof}[Proof of theorem~\ref{thm:rge-equation}]:
To prove
  \begin{align*}
    g^{\text{comb}}_k(x) = g^{\text{comb}}_1(x)\cdot(sx\partial_x - 1)g^{\text{comb}}_{k-1}(x),
  \end{align*}
let us do the differential first:
  \begin{align*}
    (sx\partial_x - 1)g^{\text{comb}}_{k-1}(x) = \sum_{\substack{C\in \mathcal R^{dec}\\b(C) \geq k-1}} (s||C||-1)x^{||C||} \omega_C \ahat{C} a_{b(C)-k+1} 
  \end{align*}
  Multiplying $g^{\text{comb}}_1$ from the left we obtain:
  \begin{align*}
    & g^{\text{comb}}_1(x)\cdot (sx\partial_x - 1)g^{\text{comb}}_{k-1}(x) = \\
    & \left[\sum_{\substack{C' \in \mathcal R^{dec}\\b(C') \geq 1}} x^{||C'||} \omega_{C'}  \ahat{C'} a_{b(C')-1}\right]
\left[ \sum_{\substack{C''\in \mathcal R^{dec}\\b(C'') \geq k-1}} (s||C''||-1)x^{||C''||} \omega_{C''} \ahat{C''} a_{b(C'')-k+1} \right] = \\
   & \sum_{\substack{C' \in \mathcal R^{dec}, C''\in \mathcal R^{dec}\\b(C') \geq 1, b(C'')\geq k-1}}  x^{||C'||+||C''||}\omega_{C'}\omega_{C''}(s||C''||-1) \ahat{C'} a_{b(C')-1}  \ahat{C''} a_{b(C'')-k+1}
  \end{align*}
  By Lemma \ref{lem:rsd-monomial} and Lemma \ref{lem:weight-reccurence} the result follows.
\end{proof}

\section{Shuffling trees}\label{sec shuffle}
The next thing we need to understand is, if we have two rooted connected chord diagrams $D_1$ and $D_2$ with corresponding trees $H_1$ and $H_2$, what possible chord diagrams can correspond to the binary rooted tree with left child $H_1$ and right child $H_2$.  The relative order of the labels of $H_1$ and $H_2$ must remain the same, so it is a question of which shuffles of the labels give trees which correspond to chord diagrams.

In the middle of the shuffling process we will need to consider the original labels of $H_1$ and $H_2$ as well as the new labels generated so far.  To keep track of this we sill use the notation $\underline{\mathbb{N}}$ and $\overline{\mathbb{N}}$.  $\underline{\mathbb{N}}$ will hold the labels for the left tree and $\overline{\mathbb{N}}$ will hold the labels for the right tree at the start of the labeling procedure. In the labeling procedure elements of $\underline{\mathbb{N}}$ resp. $\overline{\mathbb{N}}$ will be successively replaced by the final label elements which will be elements of $\mathbb{N}$. 
Let $\underline{<}$ resp. $\overline{<}$ be the strict ordering of $\underline{\mathbb{N}}$ resp. $\overline{\mathbb{N}}$. 
Note that $\underline{<}$ and $\overline{<}$ are not defined to compare an element of $\underline N$ with an element of $\overline N$ and vice versa. However, due to the iterative nature of the algorithm we will consider elements of $\mathbb{N}$ which are assigned so far to be smaller than every element of the remaining labels from $\underline{\mathbb{N}} \cup \overline{\mathbb{N}}$.

Let's define the shuffle product for the special case of two subsets 
$\{\underline 1, \ldots \underline k\}\subset \underline{\mathbb N},\{\overline 1, \ldots , \overline l\} \subset \overline{\mathbb N}$: 

\begin{align*}
\{\underline 1, \ldots \underline k\} \shuffle \{\overline 1, \ldots , \overline l\} := \Bigg\{ (w_1,\ldots,w_{k+l}) :
\begin{tabular}{l}
 $\{w_1,\ldots,w_{k+l}\} = \{ \underline 1, \ldots \underline k\} \cup \{\overline 1, \ldots , \overline l\}$ \\
  and $r < s \Rightarrow w_r \underline{<} w_s \text{ if } w_r,w_s \in \underline{\mathbb{N}}$\\
  and $r < s \Rightarrow w_r \overline{<} w_s \text{ if } w_r,w_s \in \overline{\mathbb{N}}$
 \end{tabular}\Bigg\}  
\end{align*}
\begin{example}
  \begin{align*}
  \{ \underline 1, \underline 2 \} \shuffle \{ \overline 1, \overline 2 \} = \Big\{ (\underline 1, \underline 2, \overline 1, \overline 2), (\underline 1, \overline 1, \underline 2, \overline 2), (\underline 1, \overline 1,\overline 2, \underline 2 ), (\overline 1, \underline 1, \underline 2, \overline 2),(\overline 1,\underline 1,\overline 2, \underline 2),  (\overline 1,\overline 2, \underline 1, \underline 2)
   \Big\}
  \end{align*}
\end{example}

\begin{definition}[Pre-labeling]
  \label{def:pre-labeling}
Let $L \subset \mathbb{N} \cup \underline{\mathbb{N}} \cup \overline{\mathbb{N}}$ be a finite set. We call a bijection
\begin{align*}
  \sigma' : \Lambda(T) \rightarrow L 
\end{align*}
a pre-labeling for $T$ if the image of $\sigma'$ contains elements of $\underline{\mathbb{N}}$ or $\overline{\mathbb{N}}$.  
\end{definition}
In order to use Proposition~\ref{prop:instree-shape} on pre-labeled trees, we will need the notion of smallest removable subtree containing 1 from Definition 4.6 of \cite{MYchord} in a slightly more general setting. First we define what we mean by removing a subtree from a tree.
\begin{definition}[Removing a subtree]
  \label{def:remove-tree}
  Removing a subtree $S\subset T$ rooted at a vertex $w$, denoted by $T\setminus S$, is defined by the following procedure:
  \begin{enumerate}
  \item Every edge and vertex from $S$ will be removed from $T$.
  \item The edge $(w,w')$ where $w\in S$ and $w'\not\in S$ is removed. The edge $(w',w'')$ outgoing from $w'$ where $w''\not\in S$ is contracted.
  \end{enumerate}
\end{definition}
\begin{example}
  Let $T = \begin{tikzpicture}[baseline={([yshift=0.0ex]current bounding box.center)},level/.style={sibling distance=7mm,level distance=7mm}, level 2/.style={sibling distance=7mm}]
\tikzstyle{every node}=[font=\small]  
  \coordinate
   child{child{child{child{node{$1$}}child{node{$5$}} } child{node{$2$}}} child{node{$4$}}} child{node{$3$}};
\end{tikzpicture}$ and $S =\begin{tikzpicture}[baseline={([yshift=0.0ex]current bounding box.center)},level/.style={sibling distance=7mm,level distance=7mm}, level 2/.style={sibling distance=7mm}]
\tikzstyle{every node}=[font=\small]  
  \coordinate
  child{node{$1$}} child{node{$5$}};
\end{tikzpicture}$, then $T\setminus S = \begin{tikzpicture}[baseline={([yshift=0.0ex]current bounding box.center)},level/.style={sibling distance=7mm,level distance=7mm}, level 2/.style={sibling distance=7mm}]
\tikzstyle{every node}=[font=\small]  
  \coordinate
 child{ child{node{$2$}} child{node{$4$}} } child{node{3}};
\end{tikzpicture}$
\end{example}
\begin{definition}[Smallest removable subtree]
  \label{def:smallest-removable-subtree}
  Let $(T,\sigma)$ be a rooted, plane, leaf labeled, binary tree $T$ with a (pre-)labeling $\sigma$.
  A smallest removable subtree $S$ of $T$ is defined to be the smallest tree such that $T\setminus S$ maintains P1 of Proposition~\ref{prop:instree-shape}. 
\end{definition}

In \cite{MYchord} a complete characterization of labelings of binary trees coming from chord diagrams is given.  Specifically:
\begin{prop}
  \label{prop:instree-shape}
Let $\mathcal T_n$ be the set of rooted, plane, leaf labeled, binary trees with $n$ leaves such that for every $(T,\sigma)\in \mathcal T_n$ the following two properties hold
\begin{enumerate}
\item [P1)] At any vertex $v$ that is not a leaf the smallest label in the left subtree of $v$ is smaller than the label at the end  of the fully right branch of the right subtree.
\item [P2)] Let $H$ be the smallest removable subtree of T containing 1. $H$ contains exactly the following leaf labels: 
  \begin{align*}
   \text{Im}(\sigma|_H)  = \Big\{  1,l(T)-l(H)+2, l(T)-l(H)+3, \ldots, l(T) \Big\}
  \end{align*}
  where $l(\cdot)$ denotes here the maximal label of a tree. Note that $H$ is the left side of the root share decomposition of trees.

\end{enumerate}  
Furthermore, P1 and P2 must stay true recursively in the following sense. Let $T = H \circ_r (T\setminus H)$ for some $r$ then P1 and P2 must hold for $T\setminus H$.
Then every $(T,\sigma)\in \mathcal T_n$ represents a unique rooted connected chord diagram of size $n$, so 
\begin{align*}
  \mathcal T_n = \{ T(C) : C\in \mathcal{R}, |C| = n \}
\end{align*}
\end{prop}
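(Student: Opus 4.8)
The plan is to prove the two claims of the proposition together by induction on $n = |C|$: that $C \mapsto T(C)$ lands in $\mathcal T_n$, and that it is a bijection $\mathcal R_n \to \mathcal T_n$. The whole argument runs on the compatibility between two decompositions: the root share decomposition $C = C' \circ_i C''$ of a diagram, and the operation on a labelled tree of splitting off $H$, the smallest removable subtree containing the leaf labelled $1$. The base case $n = 1$ is trivial, since the single diagram maps to the lone labelled vertex, for which P1 is vacuous and P2 reduces to $\{1\}$.

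For the forward inclusion, given $C$ with $|C| > 1$ I would take its root share decomposition $C = C' \circ_i C''$, so that $T(C) = T(C') \circ_i T(C'')$ by the definition of the insertion tree. The key computation is to read off the induced labels. Because the root chord of $C$ receives label $1$ and sits in $C'$, while the intersection order assigns the consecutive block $\{2, \ldots, |C''| + 1\}$ to the first connected component $C''$ and the remaining top labels to the other chords of $C'$, the left subtree $T(C')$ carries exactly the labels $\{1\} \cup \{l(T) - l(H) + 2, \ldots, l(T)\}$. This is precisely P2, and it simultaneously identifies $H = T(C')$ as the smallest removable subtree containing $1$, with $T \setminus H = T(C'')$. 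Property P1 at the top insertion vertex is immediate because $1$ is below every label on the right; at all other vertices P1 holds for $T(C')$ and $T(C'')$ by the induction hypothesis, and the recursive clause for $T \setminus H$ is exactly the inductive statement applied to the smaller diagram $\text{norm}(C_1)$.

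For the reverse direction I would build the inverse map. Given $(T, \sigma) \in \mathcal T_n$ with $n > 1$, P2 forces the label set of $H$, so normalizing $\sigma$ on $H$ and on $T \setminus H$ yields two strictly smaller labelled trees lying in $\mathcal T_{|H|}$ and $\mathcal T_{n - |H|}$ respectively; by induction these equal $T(C')$ and $T(C'')$ for unique connected diagrams $C', C''$. Recording the edge $r$ at which $H$ is attached and setting $C = C' \circ_r C''$ produces a connected diagram of size $n$, and running the forward construction on this $C$ returns $(T, \sigma)$. Uniqueness of the root share decomposition then shows $C$ is the only preimage, giving injectivity and surjectivity at once. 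Finally, since the chord decorations are carried along passively and enter neither the tree $T(C)$ nor its labelling $\sigma$, the undecorated statement proved here is all that is needed.

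The step I expect to be the genuine obstacle is the reverse direction, specifically confirming that $C = C' \circ_r C''$ really has $\sigma$ as its intersection-order labelling and is connected, that is, that P1, P2 and their recursive counterparts are not only necessary but jointly sufficient to pin the diagram down. This forces one to track precisely how the insertion operation $\circ_r$ interacts with the counterclockwise traversal of connected components defining the intersection order, and to check that the label set dictated by P2 leaves exactly enough room to be realized by one consistent choice of attachment edge $r$. Once this compatibility is established, the remaining verifications, namely that the label blocks behave as claimed and that P1 is transported across insertions, are routine, and the induction closes.
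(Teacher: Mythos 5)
Your overall strategy --- induction on $n$ riding on the compatibility between the root share decomposition $C = C'\circ_i C''$ and the splitting of $T$ at the smallest removable subtree $H$ containing $1$ --- is exactly the one behind the proof this paper points to: the paper itself offers no argument beyond ``See \cite{MYchord}'', and the proof there runs on the same induction. Your forward direction is essentially right, though you should spell out why P1 survives at the vertices of $T(C'')$ lying \emph{above} the insertion point: the insertion $\circ_k$ never changes the leaf at the end of the fully right branch of any subtree of $T(C'')$ (the displaced subtree becomes the \emph{right} child of the new vertex), and it can only lower the minimum of a left subtree by adding the label $1$, so P1 is inherited after the order-preserving shift of the $C''$-labels.

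The places where your proposal actually falls short are both in the reverse direction. First, you correctly name the crux --- that $C:=C'\circ_r C''$ really has $\sigma$ as its intersection order and $C'\circ_r C''$ as its root share decomposition --- but you leave it unproved. The needed observations are concrete: all non-root chords of $C'$ are placed inside a single gap between consecutive endpoints of $C''$, so they cross no chord of $C''$; hence $C''$ is a connected component of $C$ minus its root chord, and since its root sits at position $2$ it is the \emph{first} such component counterclockwise; the intersection order therefore assigns $C''$ the block $\{2,\dots,|C''|+1\}$ and the remaining chords of $C'$ the top block, which is precisely the label pattern P2 prescribes, and uniqueness of the root share decomposition then yields injectivity. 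Second, to invoke the inductive hypothesis on $H$ you need $(H,\mathrm{norm}(\sigma|_H))\in\mathcal T_{|H|}$. P1 transfers because normalization is order-preserving on the labels of $H$, but P2 for $H$ (and recursively inside it) does \emph{not} follow from P1 and P2 for $T$; the recursive clause of the proposition has to be read as applying to $H$ as well as to $T\setminus H$, and your argument should state this explicitly rather than quietly assuming it.
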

\begin{proof}See \cite{MYchord}
\end{proof}
Call a labeling satisfying these properties admissible.

Let $k = 1,\ldots,|D_1|$ and $m := b(D_2)$ the base chord corresponding to $D_2 = \mathcal T^{-1}(H_2)$, $\underline 1, \ldots \underline n \in \underline{\mathbb N}$ the pre-labeling for $H_1$ and $\overline 1,\ldots \overline h \in \overline{\mathbb N}$ the pre-labeling for $H_2$. The following procedure assigns to a shuffle 
\begin{align*}
  w=(w_1\ldots w_{k+m}) \in \{\underline 1, \ldots \underline k\} \shuffle \{\overline 1, \ldots , \overline m\}
\end{align*}
an admissible labeling $\sigma = \sigma(w)$ for the tree with $H_1$ and $H_2$ the children of the root. Because this tree will be well defined, we call it $H_1 \diamond_{\sigma} H_2$, see Definition~\ref{def:diamond-operation}. The set of shuffles $\{\underline 1, \ldots \underline k\} \shuffle \{\overline 1, \ldots , \overline m\}$ is therefore the set of admissible shuffles associated to $D_1,D_2$ or equivalently to $H_1,H_2$ and will be denoted by $D_1 \shuffle D_2$ resp. $H_1 \shuffle H_2$.
\begin{prop}
  \label{prop:diamond-leaf-labeling}
  Let $w$ be a shuffle of $\{\underline 1, \ldots \underline k\}$ and $\{\overline 1, \ldots , \overline m\}$ and $(H_1,\sigma_{H_1}), (H_2,\sigma_{H_2})$ as before, then the following algorithm produces an admissible labeling $\sigma$ and a unique leaf labeled tree $(T,\sigma) \in \mathcal T$:
\begin{enumerate}
\item Graft the left and right tree $H_1$ and $H_2$ at a new root, merge the pre-labelings and call this tree $(T_1,\sigma_1)$. To be more specific: $T_1 = B_+(H_1H_2)$ and 
  \begin{align*}
\sigma_1: \Lambda(H_{1}) \cup \Lambda(H_2) \rightarrow \{\underline 1,\ldots,\underline n\} \cup 
\{\overline 1,\ldots,\overline h\} 
  \end{align*}
is given by 
\begin{align*}
  \sigma_1(\lambda) = \begin{cases}\sigma_{H_1}(\lambda) & \text{ if } \lambda\in H_1 \\ \sigma_{H_2}(\lambda) & \text{otherwise} \end{cases}
\end{align*}
\item For each $l=1,\ldots,k+m$, replace the pre-label $\omega_l$ by the label $l \in \mathbb{N}$, i.e. modify $\sigma_1$ such that $\sigma_1(w_l) = l$
\item Assign the label $b(D_2)+k \in \mathbb{N}$ to the fully right branch leaf of $T_1$,i.e. modify $\sigma_1$ such that
  \begin{align*}
    \sigma_1 (\lambda) = b(D_2)+k
  \end{align*}
where $\lambda$ is the leaf of the fully right branch of $T_1$.
\newpage
\item Apply \textbf{LABEL}($T_1,\sigma_1,b(D_2)+k+1)$. The labeling procedure \textbf{LABEL} is defined as follows:
\begin{lstlisting}[numbers=none]
  LABEL(T,ref $\sigma$,ref l) {
    if $\sigma$ is an admissible label {
      return (T,$\sigma$)
    } 
    s := 0
    if $Im(\sigma) \subset \mathbb{N}\cup \underline{\mathbb{N}}$ or $Im(\sigma) \subset \mathbb{N}\cup \overline{\mathbb{N}}$ {
      // replace the pre-label elements by the next labels $l$ 
      // in the order that is induced by $\underline{<}$ resp. $\overline{<}$
      if $Im(\sigma) \subset \mathbb{N}\cup \underline{\mathbb{N}}$ {
        s := $|\underline{\mathbb{N}} \cap Im(\sigma)|$
        Let $\{ \underline{\lambda}_1 \:\underline{<} \ldots \underline{<} \:\underline{\lambda}_s \} = \underline{\mathbb{N}} \cap Im(\sigma)$
        for $i=1\ldots s$ {
          replace pre-label $\underline{\lambda}_i$ by $l+i$
        }
      } 
      if $Im(\sigma) \subset \mathbb{N}\cup \overline{\mathbb{N}}$ {
        s := $|\overline{\mathbb{N}} \cap Im(\sigma)|$
        Let $\{ \overline{\lambda}_1 \:\overline{<} \ldots \overline{<} \:\overline{\lambda}_s \} = \overline{\mathbb{N}} \cap Im(\sigma)$
        for $i=1\ldots s$ {
          replace pre-label $\overline{\lambda}_i$ by $l+i$
        }
      }
    } 
    // using the extended definition
    // of smallest removable subtree to get 
    // the root share decomposition on
    // the level of trees
    $T = T' \circ_r T''$ 
    LABEL($T''$,$\sigma$,l+s)
    LABEL($T'$,$\sigma$,l+s)   
  }
\end{lstlisting}
Note that \textbf{LABEL} does not change the form of $T$.
\end{enumerate}  
\end{prop}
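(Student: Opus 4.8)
The plan is to prove correctness by induction on the number of leaves of the grafted tree $T_1 = B_+(H_1H_2)$, using the recursive characterization of admissibility supplied by Proposition~\ref{prop:instree-shape}: a leaf‑labeled binary tree lies in $\mathcal T$ exactly when P1 and P2 hold at the root and, writing the root share decomposition $T = H \circ_r (T\setminus H)$, the same remains true recursively for $T\setminus H$. Since \textbf{LABEL} is deterministic, uniqueness of $(T,\sigma)$ is immediate once I show the algorithm is well defined and terminates; the substance is that the labeling it returns is admissible. I would first record that the shape of $T$ is never altered (as the remark after the algorithm notes), so throughout only $\sigma$ evolves while the underlying tree stays fixed as $B_+(H_1H_2)$.

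For termination and well‑definedness I would argue that each recursive call \textbf{LABEL}$(T'',\cdot,\cdot)$ and \textbf{LABEL}$(T',\cdot,\cdot)$ acts on a tree with strictly fewer leaves than $T$: the root share decomposition $T = T'\circ_r T''$ splits off a nonempty smallest removable subtree $T'$ and leaves a nonempty complement $T''$, and the extended notion of smallest removable subtree (Definition~\ref{def:smallest-removable-subtree}) together with the recursive clause of Proposition~\ref{prop:instree-shape} guarantees that this decomposition exists and is unique at every stage. The recursion therefore bottoms out at single leaves, and as soon as $\mathrm{Im}(\sigma)\subset\mathbb N$ on the current subtree the admissibility test returns, so the procedure halts.

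The heart of the argument is the inductive verification of P1 and P2. At the top level, Steps 1--3 fix the shape as $B_+(H_1H_2)$, transport the relative orders $\underline{<}$ and $\overline{<}$ into the initial labels $1,\dots,k+m$ via the shuffle $w$, and pin the fully right branch leaf to $b(D_2)+k$. I would verify that these choices force P1 and P2 at the root: P1 because the globally smallest label $1$ and the fully right branch value $b(D_2)+k$ are positioned so that at the root the minimal left label stays strictly below the fully‑right‑branch label of the right subtree, and P2 because the value $b(D_2)+k$ is chosen precisely so that the smallest removable subtree containing $1$ carries the contiguous top block $\{1,\, l(T)-l(H)+2,\dots,l(T)\}$ demanded by P2. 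The recursive calls reproduce this analysis on $T'$ and $T''$, and the conditional clause of \textbf{LABEL}—which replaces a block of $\underline{\mathbb N}$‑ resp. $\overline{\mathbb N}$‑pre‑labels by the next consecutive integers in $\underline{<}$ resp. $\overline{<}$ order—handles exactly the situation in which a subtree has descended entirely into $H_1$ or entirely into $H_2$, where the order inherited from $D_1$ or $D_2$ is already admissible, so consecutive relabeling preserves admissibility. Feeding these facts into the recursive criterion of Proposition~\ref{prop:instree-shape} closes the induction and places $(T,\sigma)$ in $\mathcal T$.

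The main obstacle I anticipate is the label bookkeeping. One must show that the counter $l$, advanced by $s$ at each stage, together with the interleaving of the early shuffle labels $1,\dots,k+m$ and the labels produced deeper in the recursion, always yields the contiguous blocks required by P2, and in particular that the reassignment of the fully right branch leaf to $b(D_2)+k$ in Step 3 is consistent with the value the shuffle would otherwise place there. Making this precise—tracking exactly which leaves land in $T'$ versus $T''$ and confirming that their labels form the intervals P2 prescribes at \emph{every} node—is where the real work lies. Once the invariant ``the labels assigned so far occupy the correct P2‑blocks of the current subtree'' is formulated and shown to be preserved by each call, the proposition follows.
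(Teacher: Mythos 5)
Your outline follows what is essentially the intended route (and, as far as one can tell, the route of Lemma 4.12 of \cite{MYchord}, to which the paper simply defers with ``the proof is the same''): induct on the number of leaves, use the recursive characterization of admissibility via P1, P2 and the root share decomposition from Proposition~\ref{prop:instree-shape}, get termination from the strict decrease in leaf count and uniqueness from determinism of \textbf{LABEL}. Those framing points are fine.

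The problem is that the proposal stops exactly where the proof begins. The entire content of the proposition is the claim that the specific label values produced by Steps 2--4 --- the shuffle labels $1,\dots,k+m$, the forced value $b(D_2)+k$ on the fully right branch leaf, and the blocks $l+1,\dots,l+s$ handed out by the conditional clause of \textbf{LABEL} with the counter advanced by $s$ --- realize, at \emph{every} vertex of the recursion, the contiguous top block $\{1,\,l(T)-l(H)+2,\dots,l(T)\}$ that P2 demands and the comparison that P1 demands. You name this as ``the main obstacle'' and as ``where the real work lies,'' but you do not formulate the invariant precisely, let alone verify that it is preserved when the recursion descends into $T'$ and $T''$. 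In particular you flag, without resolving, whether the Step~3 assignment of $b(D_2)+k$ is even consistent with the label the shuffle would place on that leaf (it is the leaf carrying the pre-label $\overline{m}=\overline{b(D_2)}$, so this is not a vacuous worry), and you assert rather than check that P1 holds at the root for an \emph{arbitrary} shuffle $w$, where the leaf labeled $1$ may sit in either $H_1$ or $H_2$. Until the invariant ``the labels assigned so far occupy the correct P2-blocks of the current subtree'' is written down and shown to be an inductive consequence of the bookkeeping of $l$ and $s$, this is a plan for a proof, not a proof.
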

\begin{proof}
  This is the content of Lemma 4.12 of \cite{MYchord} with the shuffling algorithm clarified.  The proof is the same.
\end{proof}

\section{Diamond decomposition}
\label{sec:diamond}

Now let us consider the decomposition mentioned in the previous section in more detail.
When starting with a tree $T(C)$ associated to a chord diagram, by removing the root we obtain a left and a right tree that define chord diagrams by themselves. Let us call those chord diagrams $D_1$ and $D_2$, They are well defined for every chord diagram $C$ and so we are able to define the diamond operation on chord diagrams. This operation, which will be defined in detail in Definition \ref{def:diamond-operation}, is needed for some technical lemmas that we need to prove the main theorem. The lemmas roughly say that summing over a set of chord diagrams of fixed size $n$ is the same as summing over all possible decompositions of $C$ into $D_1, D_2$. To be more specific we will need to prove: 

  \begin{align}\label{eq:tree-binomial}
    \sum_{\substack{||C||=i+1\\b(C)=j+1}} \omega_{\hat C} \ahat{C}
     = \sum_{k=1}^i\sum_{l=1}^j {j \choose l} \left(
      \sum_{\substack{||D_1||=k\\b(D_1)\geq l}} \omega_{D_1}\ahat{D_1} a_{b(D_1)-l}\right)
    \left( \sum_{\substack{||D_2||=i-k+1\\b(D_2)= j-l+1}}
      \omega_{\hat D_2}\ahat{D_2}\right) 
  \end{align}

The proof relies crucially on the Proposition~\ref{prop:diamond-leaf-labeling}, which tells us in how many ways two given trees can be grafted together.

If we decompose trees or chord diagrams by their left and right subtree, this is a well defined operation. However, if we start with two trees it is not clear which labeling the diamond operation should give and Proposition \ref{prop:diamond-leaf-labeling} tells us what labelings are possible for it. This being said, we define the diamond operation in the following way

\begin{definition}[Diamond operation on trees and chord diagrams]
  \label{def:diamond-operation}
  Let $T_1, T_2 \in T(\mathcal R)$, $\lambda$ be a leaf labeling of size $l(T_1) + l(T_2)$ where $l(\cdot)$ denotes the numbers of leaves, then we define $T_1 \diamond_{\lambda} T_2$  to be the unique tree that has $T_1$ as left tree, $T_2$ as right tree and $\lambda$ as leaf labeling.
  If we take the induced labeling of a tree $T$, we write $T_1 \diamond_{T} T_2$.
  Analogously, we write for chord diagrams $D_1$, $D_2$ and a chord labeling $\mu$ of size $|D_1| + |D_2|$ $D_1 \diamond_{\mu} D_2$  and for the induced labeling of a chord diagram $C$, we write $D_1 \diamond_{C} D_2$. In the case no labeling is assigned, i.e. $T_1\diamond T_2$ resp. $D_1\diamond D_2$ is defined to be the set of all possible labelings.
\end{definition}

\begin{example}
Consider 
$
C = 
\begin{tikzpicture}[baseline={([yshift=+0.5ex]current bounding box.center)}]\tikzstyle{every node}=[font=\tiny]
\draw[black,thin](0,0) circle(0.5);
\draw[fill=black](270:0.5) circle(0.5mm); 
\draw(270:0.5) circle(1.0mm);
\draw[fill=black](0:0.5) circle(0.5mm); 
\draw[fill=black](90:0.5)  circle(0.5mm); 
\draw[fill=black](180:0.5)  circle(0.5mm); 
\draw[thin,black](270:0.5) -- (90:0.5); 
\draw[thin,black](0:0.5) -- (180:0.5); 
\draw(270:0.8) node{1};
\draw(180:0.8) node{(4)};
\end{tikzpicture}  
\;\circ_k\;
\begin{tikzpicture}[baseline={([yshift=+0.5ex]current bounding box.center)}]\tikzstyle{every node}=[font=\tiny]
\draw[black,thin](0,0) circle(0.5);
\draw[fill=black](270:0.5) circle(0.5mm); 
\draw(270:0.5) circle(1.0mm);
\draw[fill=black](0:0.5) circle(0.5mm); 
\draw[fill=black](90:0.5)  circle(0.5mm); 
\draw[fill=black](180:0.5)  circle(0.5mm); 
\draw[thin,black](270:0.5) -- (90:0.5); 
\draw[thin,black](0:0.5) -- (180:0.5); 
\draw(270:0.8) node{(2)};
\draw(0:0.8) node{(3)};
\end{tikzpicture}  $
Then depending on $k$ we have the following $C$ and $C = D_1\diamond_C D_2$
\begin{center}
\begin{tabular}{|l|c|c|c|}
\hline
 k  &  1  &  2  &  3  \\
\hline
 C   & \begin{tikzpicture}[baseline={([yshift=+2.0ex]current bounding box.center)}]\tikzstyle{every node}=[font=\small]
\draw[black,thin](0,0) circle(1);
\draw[fill=black](270:1.0) circle(0.5mm); 
\draw(270:1.0) circle(1.0mm);
\draw[fill=black](0:1.0) circle(0.5mm); 
\draw[fill=black](90:1.0)  circle(0.5mm); 
\draw[fill=black](180:1.0)  circle(0.5mm); 
\draw[fill=black](215:1.0)  circle(0.5mm); 
\draw[fill=black](325:1.0)  circle(0.5mm); 
\draw[fill=black](300:1.0)  circle(0.5mm); 
\draw[fill=black](345:1.0)  circle(0.5mm); 
\draw[thin,black](270:1.0) -- (90:1.0); 
\draw[thin,black](0:1.0) -- (180:1.0); 
\draw[thin,black](215:1.0) -- (325:1.0); 
\draw plot [smooth] coordinates {(300:1.0) (320:0.7) (345:1.0)};
\draw(180:1.3) node{4};
\draw(270:1.3) node{1};
\draw(300:1.3) node{2};
\draw(210:1.3) node{3};
\end{tikzpicture}      &  \begin{tikzpicture}[baseline={([yshift=+2.0ex]current bounding box.center)}]\tikzstyle{every node}=[font=\small]
\draw[black,thin](0,0) circle(1);
\draw[fill=black](270:1.0) circle(0.5mm); 
\draw(215:1.0) circle(1.0mm);
\draw[fill=black](0:1.0) circle(0.5mm); 
\draw[fill=black](90:1.0)  circle(0.5mm); 
\draw[fill=black](180:1.0)  circle(0.5mm); 
\draw[fill=black](215:1.0)  circle(0.5mm); 
\draw[fill=black](45:1.0)  circle(0.5mm); 
\draw[fill=black](25:1.0)  circle(0.5mm); 
\draw[fill=black](65:1.0)  circle(0.5mm); 
\draw[thin,black](270:1.0) -- (90:1.0); 
\draw[thin,black](0:1.0) -- (180:1.0); 
\draw[thin,black](215:1.0) -- (45:1.0); 
\draw plot [smooth] coordinates {(25:1.0) (45:0.7) (65:1.0)};
\draw(215:1.3) node{1};
\draw(270:1.3) node{2};
\draw(0:1.3) node{3};
\draw(20:1.3) node{4};
\end{tikzpicture}  
   & \begin{tikzpicture}[baseline={([yshift=+2.0ex]current bounding box.center)}]\tikzstyle{every node}=[font=\small]
\draw[black,thin](0,0) circle(1);
\draw[fill=black](270:1.0) circle(0.5mm); 
\draw(225:1.0) circle(1.0mm);
\draw[fill=black](0:1.0) circle(0.5mm); 
\draw[fill=black](90:1.0)  circle(0.5mm); 
\draw[fill=black](180:1.0)  circle(0.5mm); 
\draw[fill=black](225:1.0)  circle(0.5mm); 
\draw[fill=black](115:1.0)  circle(0.5mm); 
\draw[fill=black](135:1.0)  circle(0.5mm); 
\draw[fill=black](155:1.0)  circle(0.5mm); 
\draw[thin,black](270:1.0) -- (90:1.0); 
\draw[thin,black](0:1.0) -- (180:1.0); 
\draw[thin,black](225:1.0) -- (135:1.0); 
\draw plot [smooth] coordinates {(115:1.0) (135:0.7) (155:1.0)};
\draw(235:1.3) node{1};
\draw(270:1.3) node{2};
\draw(0:1.3) node{3};
\draw(160:1.3) node{4};
\end{tikzpicture}      \\
 $T=H_1\diamond_T H_2$   &
  \begin{tikzpicture}[baseline={([yshift=-1.4ex]current bounding box.center)},level/.style={sibling distance=10mm,level distance=5mm}, level 2/.style={sibling distance=5mm}]\tikzstyle{every node}=[font=\tiny]
\coordinate child {
        child {node {1}}
        child {node {4}}       
      }
      child {
        child {node {2}}
        child {node {3}}
      };
  \end{tikzpicture}
    &   \begin{tikzpicture}[baseline={([yshift=-1.4ex]current bounding box.center)},level/.style={sibling distance=7mm,level distance=5mm}, level 2/.style={sibling distance=5mm}]\tikzstyle{every node}=[font=\tiny]
\coordinate child {
        child {
        child {node {1}}
        child {node {4}}
        }
        child {node {2}}       
      }
      child {node {3}};
  \end{tikzpicture}     &    \begin{tikzpicture}[baseline={([yshift=-1.4ex]current bounding box.center)},level/.style={sibling distance=5mm,level distance=5mm}, level 2/.style={sibling distance=5mm}]\tikzstyle{every node}=[font=\tiny]\coordinate
      child {node {2}}
      child {
        child {
        child {node {1}}
        child {node {4}}
        }
        child {node {3}}       
      };
  \end{tikzpicture}   \\
\hline
 $H_1$ & \begin{tikzpicture}[baseline={([yshift=+.5ex]current bounding box.center)},level/.style={sibling distance=5mm,level distance=5mm}, level 2/.style={sibling distance=5mm}]\tikzstyle{every node}=[font=\tiny]
    \coordinate child {node {1}} child {node {4}};
  \end{tikzpicture} &  \begin{tikzpicture}[baseline={([yshift=-1.4ex]current bounding box.center)},level/.style={sibling distance=5mm,level distance=5mm}, level 2/.style={sibling distance=5mm}]\tikzstyle{every node}=[font=\tiny]\coordinate
        child { child {node {1}} child {node {4}} }  child {node {2}};
  \end{tikzpicture}  &  \begin{tikzpicture}[baseline={([yshift=-1.4ex]current bounding box.center)},level/.style={sibling distance=5mm,level distance=5mm}, level 2/.style={sibling distance=5mm}]\tikzstyle{every node}=[font=\tiny]\coordinate
      child {node {2}};
  \end{tikzpicture} \\
 $H_2$ & \begin{tikzpicture}[baseline={([yshift=+.5ex]current bounding box.center)},level/.style={sibling distance=5mm,level distance=5mm}, level 2/.style={sibling distance=5mm}]\tikzstyle{every node}=[font=\tiny]
    \coordinate child {node {2}} child {node {3}};
  \end{tikzpicture} &\begin{tikzpicture}[baseline={([yshift=-1.4ex]current bounding box.center)},level/.style={sibling distance=5mm,level distance=5mm}, level 2/.style={sibling distance=5mm}]\tikzstyle{every node}=[font=\tiny]\coordinate
      child {node {3}};
  \end{tikzpicture} & \begin{tikzpicture}[baseline={([yshift=-1.4ex]current bounding box.center)},level/.style={sibling distance=5mm,level distance=5mm}, level 2/.style={sibling distance=5mm}]\tikzstyle{every node}=[font=\tiny]\coordinate
        child { child {node {1}} child {node {4}} }  child {node {3}};
  \end{tikzpicture} \\
 $D_1$  & 
\begin{tikzpicture}[baseline={([yshift=+0.5ex]current bounding box.center)}]\tikzstyle{every node}=[font=\tiny]
\draw[black,thin](0,0) circle(.5);
\draw[fill=black](270:0.5) circle(0.5mm); 
\draw(270:0.5) circle(1.0mm);
\draw[fill=black](0:0.5) circle(0.5mm); 
\draw[fill=black](90:0.5)  circle(0.5mm); 
\draw[fill=black](180:0.5)  circle(0.5mm); 
\draw[thin,black](270:0.5) -- (90:0.5); 
\draw[thin,black](0:0.5) -- (180:0.5); 
\draw(270:0.8) node{1};
\draw(180:0.8) node{2};
\end{tikzpicture}     & 
\begin{tikzpicture}[baseline={([yshift=+0.5ex]current bounding box.center)}]\tikzstyle{every node}=[font=\tiny]
\draw[black,thin](0,0) circle(.5);
\draw[fill=black](270:0.5) circle(0.5mm); 
\draw(270:0.5) circle(1.0mm);
\draw[fill=black](30:0.5) circle(0.5mm); 
\draw[fill=black](90:0.5)  circle(0.5mm); 
\draw[fill=black](150:0.5)  circle(0.5mm); 
\draw[fill=black](330:0.5)  circle(0.5mm); 
\draw[fill=black](210:0.5)  circle(0.5mm); 
\draw[thin,black](270:0.5) -- (90:0.5); 
\draw[thin,black](30:0.5) -- (150:0.5); 
\draw[thin,black](330:0.5) -- (210:0.5); 
\draw(270:0.8) node{1};
\draw(200:0.8) node{2};
\draw(150:0.8) node{3};
\end{tikzpicture}
    &
\begin{tikzpicture}[baseline={([yshift=+0.5ex]current bounding box.center)}]\tikzstyle{every node}=[font=\tiny]
\draw[black,thin](0,0) circle(.5);
\draw[fill=black](270:0.5) circle(0.5mm); 
\draw(270:0.5) circle(1.0mm);
\draw[fill=black](90:0.5)  circle(0.5mm); 
\draw[thin,black](270:0.5) -- (90:0.5); 
\draw(270:0.8) node{1};
\end{tikzpicture}
     \\
 $D_2$  & \begin{tikzpicture}[baseline={([yshift=+0.5ex]current bounding box.center)}]\tikzstyle{every node}=[font=\tiny]
\draw[black,thin](0,0) circle(.5);
\draw[fill=black](270:0.5) circle(0.5mm); 
\draw(270:0.5) circle(1.0mm);
\draw[fill=black](0:0.5) circle(0.5mm); 
\draw[fill=black](90:0.5)  circle(0.5mm); 
\draw[fill=black](180:0.5)  circle(0.5mm); 
\draw[thin,black](270:0.5) -- (90:0.5); 
\draw[thin,black](0:0.5) -- (180:0.5); 
\draw(270:0.8) node{1};
\draw(180:0.8) node{2};
\end{tikzpicture}    & 
\begin{tikzpicture}[baseline={([yshift=+0.5ex]current bounding box.center)}]\tikzstyle{every node}=[font=\tiny]
\draw[black,thin](0,0) circle(.5);
\draw[fill=black](270:0.5) circle(0.5mm); 
\draw(270:0.5) circle(1.0mm);
\draw[fill=black](90:0.5)  circle(0.5mm); 
\draw[thin,black](270:0.5) -- (90:0.5); 
\draw(270:0.8) node{1};
\end{tikzpicture}
    & 
\begin{tikzpicture}[baseline={([yshift=+0.5ex]current bounding box.center)}]\tikzstyle{every node}=[font=\tiny]
\draw[black,thin](0,0) circle(.5);
\draw[fill=black](270:0.5) circle(0.5mm); 
\draw(270:0.5) circle(1.0mm);
\draw[fill=black](30:0.5) circle(0.5mm); 
\draw[fill=black](90:0.5)  circle(0.5mm); 
\draw[fill=black](150:0.5)  circle(0.5mm); 
\draw[fill=black](330:0.5)  circle(0.5mm); 
\draw[fill=black](210:0.5)  circle(0.5mm); 
\draw[thin,black](270:0.5) -- (90:0.5); 
\draw[thin,black](30:0.5) -- (150:0.5); 
\draw[thin,black](330:0.5) -- (210:0.5); 
\draw(270:0.8) node{1};
\draw(200:0.8) node{2};
\draw(150:0.8) node{3};
\end{tikzpicture}
    \\
\hline
\end{tabular}
\end{center}
\end{example}

The interplay between the diamond operation and the root share decomposition will be an important tool for proofs.  The form of their relationship depends on the insertion place of the root share decomposition as stated below:

\begin{prop}\label{prop:threeway-assoc}
Let $C$ be a chord diagram with $|C|\geq 3$ and $C = C' \circ_k C_2$ where $C_2 = C'' \diamond_{C_2} C'''$, then
\begin{align*}
  C'\circ_k (C'' \diamond_{C_2} C''') = 
\begin{cases}
  (C'\circ_{k-1} C'')\diamond_C C''' & |C''| \leq k-1, k>1\\
  C'\diamond_C (C''\diamond_{C_2} C''')  & k=1\\
  C''\diamond_C (C' \circ_{k-|C''|-1} C''') & else
\end{cases}
\end{align*}  
\end{prop}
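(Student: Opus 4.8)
The plan is to prove the identity at the level of insertion trees and then transport it back to chord diagrams through the bijection $\mathcal T$ of Proposition~\ref{prop:instree-shape}. All three operations in the statement have transparent tree descriptions: $\diamond$ grafts two trees under a new root (Definition~\ref{def:diamond-operation}), $\circ_k$ is the edge insertion of Definition~\ref{def:instree}, and the two insertion operations are intertwined by $T$, so that $T(C'\circ_k C_2)=T(C')\circ_k T(C_2)$ while $T(C''\diamond_{C_2}C''')=B_+(T(C''),T(C'''))$. Hence it is enough to identify the underlying \emph{shape} of $T(C')\circ_k B_+(T(C''),T(C'''))$ in each case and then to check that the induced leaf labeling agrees with the one prescribed on the right-hand side. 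Note that the hypothesis $|C|\geq 3$ is exactly what makes both decompositions available, since it forces $|C_2|\geq 2$ (so that $C_2$ admits a diamond splitting) and $|C'|\geq 1$.

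First I would set up the pre-order edge labels of $B_+(T(C''),T(C'''))$. Edge $1$ is the virtual edge; the edges of the left subtree $T(C'')$, together with the edge descending into it, occupy an initial block; and the edges of the right subtree $T(C''')$ occupy the terminal block. The proof is then a case analysis on which block contains the insertion edge $k$. When $k=1$ the insertion is at the virtual edge, which by Definition~\ref{def:instree} produces a new root carrying $T(C')$ on the left and all of $B_+(T(C''),T(C'''))$ on the right; reading this back gives $C'\diamond_C(C''\diamond_{C_2}C''')$. When edge $k$ lies in the initial block the subtree hanging below it stays inside $T(C'')$, so the insertion restructures only the left subtree and $T(C')\circ_k B_+(T(C''),T(C'''))=B_+\!\big(T(C')\circ_{k-1}T(C''),\,T(C''')\big)$, that is $(C'\circ_{k-1}C'')\diamond_C C'''$; the shift $k\mapsto k-1$ records that the virtual edge of $T(C'')$ sits one place later inside $B_+(T(C''),T(C'''))$. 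When edge $k$ lies in the terminal block the left subtree $T(C'')$ is untouched and only $T(C''')$ is restructured, yielding $B_+\!\big(T(C''),\,T(C')\circ_{k'}T(C''')\big)=C''\diamond_C(C'\circ_{k'}C''')$ for the reindexing $k'$ obtained by subtracting from $k$ the number of edges preceding the terminal block.

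The genuinely delicate point, and the step I expect to be the main obstacle, is the index bookkeeping: I must pin down the exact threshold on $k$ separating the initial block from the terminal block and the precise value of the shift $k'$, both expressed through $|C''|$. This reduces to counting the pre-order edges of $B_+(T(C''),T(C'''))$ that belong to each subtree and matching that count against the chord-diagram insertion index of $C_2=C''\diamond C'''$; concretely I would track the base chord $b(C_2)$ and the first connected component through the root-share description, and take particular care to assign the single boundary edge (the one leading into $T(C''')$) to its correct case, since there the inner insertion degenerates to the grafting $C'\circ_1 C'''=C'\diamond C'''$.

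Finally I would confirm that the leaf labelings match. Since $\mathcal T$ is a bijection onto the admissibly labeled trees of Proposition~\ref{prop:instree-shape}, it suffices to verify that the induced labeling of $C$ restricts on each side to the labelings used by $\diamond_C$ and by the inner $\circ$ or $\diamond_{C_2}$. This follows from the recursive form of admissibility (properties P1 and P2 persisting under the root-share decomposition) together with Proposition~\ref{prop:diamond-leaf-labeling}, which enumerates exactly the labelings that can occur when two given trees are grafted. Assembling the shape computation of each case with this labeling check completes the argument.
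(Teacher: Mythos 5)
Your proposal is correct and follows essentially the same route as the paper: the paper's proof is precisely a three-way case analysis on where the insertion index $k$ falls in $T(C)$ --- at the root, in the left subtree, or in the right subtree --- with the corresponding index shifts $k-1$ and $k-1-|C''|$. Your version simply spells out the pre-order edge bookkeeping and the labeling check in more detail than the paper does.
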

\begin{proof}
  Let $|C| \geq 3$ with $C = C' \circ_{k} C_2$. There are three cases to consider if we look at $T = T(C)$:
  \begin{enumerate}
  \item $k$ is the root of $T$: This is the case $k=1$ and so root share decomposition and diamond decomposition coincide.
  \item $k$ lies in the left subtree of $T$: Let $T(D_1)$ be the left subtree of $T$ and $D'_1$ the diagram corresponding chord diagram to left subtree of $T(D_1)$, then 
    \begin{align*}
      D_1 = C' \circ_{k-1} D'_1
    \end{align*}
  \item $k$ lies in the right subtree of $T$: Let $T(D_2)$ be the right subtree of $T$ and $D'_2$ the diagram corresponding chord diagram to left subtree of $T(D_2)$, then
    \begin{align*}
      D_2 = C' \circ_{k-1-|D_1|} D'_2
    \end{align*}
  \end{enumerate}
\end{proof}

We have some control over the base chord under the diamond decomposition as seen in the following lemma which is Lemma 4.1 of \cite{MYchord}.
\begin{lemma}[Triangle inequality for the base chords]
  \label{lem:triangleineq}
  \begin{align*}
    b(D_1 \diamond D_2) \leq b(D_1) + b(D_2)
  \end{align*}
\end{lemma}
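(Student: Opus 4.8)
The plan is to first pin down where the base chord sits inside the insertion tree, and then read the claim off the diamond decomposition. Concretely, I would prove the auxiliary fact that for every rooted connected chord diagram $C$ the base chord $b(C)$ is the label of the leaf at the bottom of the fully-right branch of $T(C)$. This goes by induction on $|C|$ using the root share decomposition $C = C'\circ_i C''$, where $C''$ is the first connected component of $C$ with its root removed: chords in different components of $C\setminus\{\text{root}\}$ never cross, so the smallest terminal chord of $C$ lies in the first component and $b(C)=b(C'')+1$; on the tree side $T(C)=T(C')\circ_i T(C'')$ only grafts $T(C')$ as a \emph{left} subtree, hence leaves the fully-right branch of $T(C'')$ untouched. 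The single-chord base case is immediate, so the induction closes.

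Granting this, take $C\in D_1\diamond D_2$, so that $T(C)$ is the tree with $H_1=T(D_1)$ and $H_2=T(D_2)$ as the left and right subtrees of the root. The fully-right branch of $T(C)$ passes from the root into $H_2$ and terminates at the fully-right leaf of $H_2$, which by the auxiliary fact is the base chord of $D_2$. Therefore $b(C)$ is the $C$-label of the base chord of $D_2$. Since the diamond labelings preserve the internal orders of $D_1$ and $D_2$ (Proposition~\ref{prop:diamond-leaf-labeling}), exactly the $b(D_2)-1$ chords of $D_2$ preceding its base chord receive smaller labels; writing $k$ for the number of chords of $D_1$ that receive a label below $b(C)$ — necessarily the first $k$ chords of $D_1$ in its own order — I obtain
\[ b(D_1\diamond D_2)=b(D_2)+k. \]

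The whole content of the lemma is then the inequality $k\le b(D_1)$, and this is where I expect the real work to be. The point to establish is that no chord of $D_1$ lying strictly beyond its base chord (positions $b(D_1)+1,b(D_1)+2,\ldots$ of $D_1$'s order) can be shuffled below $b(C)$. I would deduce this from the admissibility conditions P1 and P2 of Proposition~\ref{prop:instree-shape}, as already packaged in the explicit shuffle of Proposition~\ref{prop:diamond-leaf-labeling}: the admissible shuffles associated to $D_1,D_2$ only interleave the first $b(D_1)$ labels of $D_1$ with the first $m=b(D_2)$ labels of $D_2$, so $k$ can never exceed $b(D_1)$. The driving mechanism is P2: if a chord past the base chord of $D_1$ dropped below $b(C)$, then the smallest removable subtree containing the leaf $1$ would no longer carry exactly the prescribed block of top labels, contradicting that the labeling comes from an honest chord diagram. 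Making this last step fully rigorous — rather than reading it off the shuffle description — is the main obstacle.

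Combining the two facts gives $b(D_1\diamond D_2)=b(D_2)+k\le b(D_1)+b(D_2)$, as claimed. It is worth noting that the bound is sharp: already for $D_1$ a pair of crossing chords and $D_2$ a single chord the maximal admissible shuffle yields $k=b(D_1)$, so equality can occur.
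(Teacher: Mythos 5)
Your reduction is sound and, as far as it goes, more explicit than anything in this paper, which offers no proof of the lemma at all but simply imports it as Lemma 4.1 of \cite{MYchord}. The auxiliary fact (the base chord of $C$ is the fully-right leaf of $T(C)$, by induction on the root share decomposition) is correct and is indeed used tacitly elsewhere, e.g.\ in Claim 1 of the proof of Lemma~\ref{lem:declemma411}; the identity $b(D_1\diamond D_2)=b(D_2)+k$, with $k$ the number of chords of $D_1$ labelled below $b(C)$, then follows from order preservation exactly as you say. So you have correctly isolated the entire content of the lemma in the single inequality $k\le b(D_1)$.

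That inequality is precisely what you do not prove, and your attempt to read it off Proposition~\ref{prop:diamond-leaf-labeling} does not work: as stated there the shuffle parameter runs over $k=1,\ldots,|D_1|$, not $k=1,\ldots,b(D_1)$, so the proposition as written only yields $b(D_1\diamond D_2)\le b(D_2)+|D_1|$, which is strictly weaker whenever $b(D_1)<|D_1|$. (The restriction to $k\le b(D_1)$ is exactly what produces the binomial $\binom{j}{l}$ together with the side condition $b(D_1)\ge l$ in Proposition~\ref{prop:decprop43}, but it is nowhere established in this paper; it is part of what is being imported from \cite{MYchord}.) Your instinct that P2 is the mechanism is right: for instance, take $T(D_1)$ to be the tree whose left subtree has leaves $1,3$ and whose right child is the leaf $2$ (so $b(D_1)=2<3=|D_1|$) and let $D_2$ be a single chord; the labelling with $k=3$ forces the smallest removable subtree containing $1$ to carry the labels $\{1,3\}$ rather than the required $\{1,4\}$, so P2 fails. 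But upgrading this to a general argument -- showing that every chord of $D_1$ whose $D_1$-label exceeds $b(D_1)$ must receive a $C$-label exceeding $b(C)$, by tracking the iterated root-share/P2 condition through the tree -- is the whole lemma, and you have explicitly left it open. As it stands the proposal is an honest and correct reduction, not a proof.
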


Again we see the interplay between the strengths and weaknesses of $\circ$ and $\diamond$.
The terminal sets are under control for $\circ_k$ but we don't know
what they do for $\diamond$. The branch-left vectors are under control
for $\diamond$ (let $C= D_1\diamond D_2$, then the only component that is increased is the base chord
of $D_2$) but we know only partial results on $\circ_k$. 
To prove equation \ref{eq:tree-binomial} we need the following Lemma which explains how the weighted monomials of chord diagrams behave under the diamond operation.
\begin{lemma}
  \label{lem:declemma411}
  Let $C\in \mathcal R$ with $|C| \geq 2$ and $T,H_1,H_2,D_1,D_2$ as before.
  Let $d = d_{b(D_1)}$ be the decoration of the smallest terminal chord of $D_1$, then
  \begin{align*}
    & \omega_{\hat C} \ahat{C} = \omega_{D_1} \omega_{\hat
      D_2} \ahat{D_1} \ahat{D_2} a_{d,b(D_1)+b(D_2)-b(C)} \\
  \end{align*}
\end{lemma}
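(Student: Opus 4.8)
The plan is to split the identity into its weight half, $\omega_{\hat C}=\omega_{D_1}\,\omega_{\hat D_2}$, and its monomial half, $\ahat{C}=\ahat{D_1}\,\ahat{D_2}\,a_{d,\,b(D_1)+b(D_2)-b(C)}$, after recording the one structural fact on which both rest. Under the diamond decomposition $C=D_1\diamond_C D_2$ the base chord $b(C)$ and the base chord $b(D_2)$ are carried by the \emph{same} leaf of the underlying tree, namely the end of the fully right branch of $T(C)$, which is also the end of the fully right branch of the right subtree $H_2=T(D_2)$. In particular $d(b(C))=d(b(D_2))=:d_2$, and writing $k$ for the shuffle parameter of Proposition~\ref{prop:diamond-leaf-labeling} (so that this leaf receives the label $b(D_2)+k$) we have $b(C)=b(D_2)+k$. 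By contrast $b(D_1)$ is not the base chord of $C$, and it is precisely the unique chord recorded in $\ahat{C}$ but in neither $\ahat{D_1}$ nor $\ahat{D_2}$; it is the one that will carry the extra factor.

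For the weight I would first justify the branch-left statement announced just before the lemma. Grafting $H_1$ and $H_2$ under a common root leaves every maximal right-child path inside $H_1$ and inside $H_2$ intact, so the only path that is lengthened is the one beginning at the end of the fully right branch of $H_2$: its path now continues past the root of $H_2$ (a right child of the new root) up to the new root. Hence $\nu_v(C)=\nu_v(D_1)$ for every chord $v$ of $D_1$, and $\nu_v(C)=\nu_v(D_2)$ for every chord of $D_2$ except $v=b(D_2)$, where $\nu_{b(D_2)}(C)=\nu_{b(D_2)}(D_2)+1$. Since decorations are inherited unchanged, the product of Definition~\ref{def:weight} factors as $\omega(C)=\omega_{D_1}\,\omega_{\hat D_2}\binom{d_2 s+\nu_{b(D_2)}(D_2)-1}{\nu_{b(D_2)}(D_2)+1}$, the last binomial being exactly the factor attached to $b(C)$. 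Deleting that factor yields $\omega_{\hat C}=\omega_{D_1}\,\omega_{\hat D_2}$.

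For the monomial I would run the argument of Lemma 4.11 of \cite{MYchord}, now carrying the decoration of each chord along with its label. The non-base terminal chords of $D_2$ remain terminal in $C$, and the terminal chords of $D_1$ other than $b(D_1)$ remain terminal, each with an unchanged recorded gap, so that their contributions reassemble into $\ahat{D_1}\,\ahat{D_2}$. The only chord still to be accounted for is $b(D_1)$, whose recorded shift in $\ahat{C}$ is read off from its label position; using $b(C)=b(D_2)+k$ together with the bookkeeping behind the triangle inequality $b(C)\le b(D_1)+b(D_2)$ of Lemma~\ref{lem:triangleineq}, this shift equals $b(D_1)-k=b(D_1)+b(D_2)-b(C)$, producing the factor $a_{d,\,b(D_1)+b(D_2)-b(C)}$ with $d=d(b(D_1))$. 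Multiplying the two halves gives the stated identity.

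The main obstacle is this monomial bookkeeping. Because the shuffle interleaves the labels of $D_1$ and $D_2$ rather than placing them in disjoint blocks, one must check that the gaps $t_c-t_{c-1}$ recorded by the terminal chords of $D_2$ (and of $D_1$) are genuinely undisturbed by the interleaved labels of the other factor, and that $b(D_1)$ is the only chord whose terminal status and gap are actually rearranged by the grafting, with its predecessor silently switched from $b(D_1)$ to $b(C)$ while the net shift is preserved. This is exactly the delicate point settled in \cite{MYchord}; here one only needs to observe that attaching a decoration to each chord commutes with every step of that argument, so the gap computation and the identification of the shift $b(D_1)+b(D_2)-b(C)$ carry over verbatim.
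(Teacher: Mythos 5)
Your proposal is correct and takes essentially the same route as the paper: it splits the identity into the weight claim $\omega_{\hat C}=\omega_{D_1}\omega_{\hat D_2}$, proved by observing that grafting increments the branch-left value only at the fully-right-branch leaf, which is the common base chord of $C$ and $D_2$ and hence omitted on both sides, and the monomial claim, which is deferred to Lemma 4.11 of \cite{MYchord} with decorations carried along unchanged. Your write-up is merely more explicit than the paper's two-claim proof, but the content is the same.
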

\begin{proof}
  The Lemma follows from the following two claims:
  \begin{enumerate}
  \item[Claim 1:] $\omega_{\hat C} =\omega_{D_1} \omega_{\hat D_2} $. Remember that
      $\omega_{\hat C} = \prod_{k\neq b(C)} {sd_k+\nu_k-2 \choose \nu_k}$ and notice that in the induced labeling the base chord of
$D_2$ and $C$ are the same, so we have $b(D_2) = b(C)$ because it is the fully right branch leaf of the tree that correspond to $D_2$ as well that of $C$, hence the product of chords is the same on both sides.
  \item[Claim 2:] $\ahat{C} = \ahat{D_1} \ahat{D_2} a_{d,b(D_1)+b(D_2)-b(C)}$. This is Lemma 4.11 of \cite{MYchord}.
  \end{enumerate}
\end{proof}

\begin{example}
 Consider the following chord diagram with arbitrary decorations $d_1,\ldots,d_4$ and arbitrary $s\neq 1$:
 \begin{align*}
   C = \begin{tikzpicture}[baseline={([yshift=+1.0ex]current bounding box.center)}]\tikzstyle{every node}=[font=\small]
\draw[black,thin](0,0) circle(1);
\draw[fill=black](270:1.0) circle(0.5mm); 
\draw(225:1.0) circle(1.0mm);
\draw[fill=black](0:1.0) circle(0.5mm); 
\draw[fill=black](90:1.0)  circle(0.5mm); 
\draw[fill=black](180:1.0)  circle(0.5mm); 
\draw[fill=black](225:1.0)  circle(0.5mm); 
\draw[fill=black](115:1.0)  circle(0.5mm); 
\draw[fill=black](135:1.0)  circle(0.5mm); 
\draw[fill=black](155:1.0)  circle(0.5mm); 
\draw[thin,black](270:1.0) -- (90:1.0); 
\draw[thin,black](0:1.0) -- (180:1.0); 
\draw[thin,black](225:1.0) -- (135:1.0); 
\draw plot [smooth] coordinates {(115:1.0) (135:0.7) (155:1.0)};
\draw(235:1.3) node{$d_1$};
\draw(270:1.3) node{$d_2$};
\draw(0:1.3) node{$d_3$};
\draw(160:1.3) node{$d_4$};
\end{tikzpicture}
 \end{align*}
It has terminals $\text{ter}(C) = \{3,4\}$, so $b(C) = 3$.
The corresponding tree is 
\begin{align*}
T(C) = \begin{tikzpicture}[baseline={([yshift=-0.5ex]current bounding box.center)},level/.style={sibling distance=5mm,level distance=5mm}, level 2/.style={sibling distance=5mm}]\tikzstyle{every node}=[font=\tiny]\coordinate
      child {node {2}}
      child {
        child {
        child {node {1}}
        child {node {4}}
        }
        child {node {3}}       
      };
  \end{tikzpicture}  
\end{align*}
so all in all we have for the left hand side of the previous lemma:
\begin{align*}
 \omega_{\hat C} \ahat{C} = (d_4s-1)a_{d_4,1}a_{d_1,0}a_{d_2,0}   
\end{align*}
For the right hand side we have the following trees and diagrams (the decoration is inherited but the labeling is normalized): 
\begin{align*}
  D_1 = \begin{tikzpicture}[baseline={([yshift=+0.5ex]current bounding box.center)}]\tikzstyle{every node}=[font=\tiny]
\draw[black,thin](0,0) circle(.5);
\draw[fill=black](270:0.5) circle(0.5mm); 
\draw(270:0.5) circle(1.0mm);
\draw[fill=black](90:0.5)  circle(0.5mm); 
\draw[thin,black](270:0.5) -- (90:0.5); 
\draw(270:0.8) node{1, $d_2$};
\end{tikzpicture}\qquad
  D_2 = \begin{tikzpicture}[baseline={([yshift=+0.5ex]current bounding box.center)}]\tikzstyle{every node}=[font=\tiny]
\draw[black,thin](0,0) circle(.5);
\draw[fill=black](270:0.5) circle(0.5mm); 
\draw(270:0.5) circle(1.0mm);
\draw[fill=black](30:0.5) circle(0.5mm); 
\draw[fill=black](90:0.5)  circle(0.5mm); 
\draw[fill=black](150:0.5)  circle(0.5mm); 
\draw[fill=black](330:0.5)  circle(0.5mm); 
\draw[fill=black](210:0.5)  circle(0.5mm); 
\draw[thin,black](270:0.5) -- (90:0.5); 
\draw[thin,black](30:0.5) -- (150:0.5); 
\draw[thin,black](330:0.5) -- (210:0.5); 
\draw(270:0.8) node{1, $d_1$};
\draw(200:0.9) node{2, $d_3$};
\draw(150:0.9) node{3, $d_4$};
\end{tikzpicture}\qquad 
H_1 =  \begin{tikzpicture}[baseline={([yshift=0.0ex]current bounding box.center)},level/.style={sibling distance=5mm,level distance=5mm}, level 2/.style={sibling distance=5mm}]\tikzstyle{every node}=[font=\tiny]\coordinate
      child {node {1}};
  \end{tikzpicture} \qquad H_2 = \begin{tikzpicture}[baseline={([yshift=0.0ex]current bounding box.center)},level/.style={sibling distance=5mm,level distance=5mm}, level 2/.style={sibling distance=5mm}]\tikzstyle{every node}=[font=\tiny]\coordinate
        child { child {node {1}} child {node {3}} }  child {node {2}};
  \end{tikzpicture}
\end{align*}
So we have for the right hand side:
\begin{align*}
&  \ahat{D_1} = 1\\ & \ahat{D_2} = a_{d_4,1}a_{d_1,0}\\ & \omega_{D_1} = 1\\ & \omega_{\hat D_2} = d_4s-1 \\
& a_{d(b(D_1)),b(D_1)+b(D_2)-b(C)} = a_{d_2,0}
\end{align*}
multiplying this we indeed get the same as the left hand side as stated in Lemma \ref{lem:declemma411}.
\end{example}

\begin{lemma}\label{prop:big-restricted-sum-eq}
For $j \in \mathbb Z_{\geq 0}$ and every $k\in \mathbb N$
\begin{align*}
  \sum_{\substack{||C|| = i+1\\d_{j+1} = 1\\\nu_{j+1} = n\\b(C) = j+1} } \ahat{C} \omega_{\hat C} =  \sum_{\substack{||C|| = i+k\\d_{j+1} = k\\\nu_{j+1} = n\\b(C) = j+1} } \ahat{C} \omega_{\hat C} 
\end{align*}
\end{lemma}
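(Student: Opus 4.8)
The plan is to produce a decoration-changing bijection between the two index sets which leaves every summand unchanged. The key observation I would establish first is that the summand $\ahat{C}\,\omega_{\hat C}$ is \emph{independent} of the decoration $d_{b(C)}$ of the base chord. For the weight this is immediate from Definition~\ref{def:weight}, where $\omega_{\hat C}$ is defined as the weight of $C$ with the factor for the base chord removed: since $\omega_{\hat C}=\prod_{c\neq b(C)}\binom{d_c s+\nu_c-2}{\nu_c}$, the value $d_{b(C)}$ never enters it. For the monomial $\ahat{C}$ I would argue from its definition that, writing $\mathrm{ter}(C)=\{t_0<\cdots<t_n\}$ with $t_0=b(C)$, the base chord is terminal and hence contributes nothing to the non-terminal product $\prod_{k\notin\mathrm{ter}(C)}a_{d_k,0}$; moreover the terminal product runs only over $c=1,\ldots,n$ (the hat deletes the $t_0$ factor), so the decoration $d_{t_0}=d_{b(C)}$ appears in no factor of $\ahat{C}$ either. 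Its \emph{position} $t_0$ does appear, through the gap $t_1-t_0$, but this is fixed once $b(C)=j+1$ is fixed.

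With this in hand I would define the bijection. Given a diagram $C$ counted on the left-hand side, so that $b(C)=j+1$, $\nu_{j+1}(C)=n$, $d_{j+1}=1$ and $\|C\|=i+1$, let $\Phi(C)$ be the same rooted connected chord diagram with the single decoration $d_{j+1}$ changed from $1$ to $k$. Changing one chord's decoration alters neither the underlying fixed-point-free involution, nor its crossings, nor the intersection order, nor the terminal set, so the insertion tree and hence the branch-left vector are unchanged; in particular $b(\Phi(C))=j+1$ and $\nu_{j+1}(\Phi(C))=n$. Since only $d_{j+1}$ changes, $\|\Phi(C)\|=\|C\|-1+k=i+k$ and $d_{j+1}(\Phi(C))=k$, so $\Phi(C)$ is counted on the right-hand side. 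The map resetting $d_{j+1}$ back to $1$ is a two-sided inverse, so $\Phi$ is a bijection between the two index sets.

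Combining the two steps finishes the argument: for each $C$ on the left we have $\ahat{\Phi(C)}\,\omega_{\hat{\Phi(C)}}=\ahat{C}\,\omega_{\hat C}$, because $\Phi$ changes only $d_{b(C)}$, on which the summand does not depend. Thus $\Phi$ matches the two sums term by term and the claimed equality follows. I do not expect a genuine obstacle here; the one point that repays care is the verification that $d_{b(C)}$ truly appears nowhere in $\ahat{C}$, which rests on the base chord being terminal together with the precise meaning of the hat notation recorded just after the definition of $\ahat{C}$. Once that independence is pinned down, the bijection is forced and the identity is immediate.
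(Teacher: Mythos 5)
Your proposal is correct and follows essentially the same route as the paper: the paper also defines the map replacing $d_{j+1}=1$ by $k$, observes it is a bijection between the two index sets, and notes that the summand is unchanged because the decoration of the base chord is ignored by both $\ahat{C}$ and $\omega_{\hat C}$ when $b(C)=j+1$. Your additional care in checking that changing a decoration does not affect the underlying diagram, the intersection order, the terminal set, or the branch-left vector is a sound elaboration of what the paper states more briefly.
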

\begin{proof}
Let
\begin{align*}
  \mathcal C_{i,j,k} := \left\{C\in \mathcal{R}^{\text{dec}}: ||C||= i+k, d_{j+1} = k, \nu_{j+1} = n, b(C) = j+1 \right\}
\end{align*}
We need to show that 
\begin{align*}
  A_{i,j} := \left\{ \ahat{C} \omega_{\hat C}: C\in \mathcal C_{i,j,1} \right\} \text{ is in bijection with every } A_{i,j,k} := \left\{ \ahat{C} \omega_{\hat C}: C\in \mathcal C_{i,j,k} \right\}
\end{align*}
For a fixed $k$ the map that replaces the decoration $d_{j+1} = 1$ by $k$ clearly defines a bijection between $\mathcal C_{i,j,1}$ and $\mathcal C_{i,j,k}$. This map lifts to a bijection $A_{i,j} \leftrightarrow A_{i,j,k}$ because the decoration $d_{j+1}$ is ignored by definition of $\ahat{C}$ and $\omega_{\hat C}$, since $b(C) = j+1$.
\end{proof}

\begin{prop}[Decorated version of Proposition 4.3 from \cite{MYchord}]
  \label{prop:decprop43}
  \begin{align*}
    \sum_{\substack{||C||=i+1\\b(C)=j+1}} \omega_{\hat C} \ahat{C} = \sum_{k=1}^i\sum_{l=1}^j {j \choose l} \left(
      \sum_{\substack{||D_1||=k\\b(D_1)\geq l}} \omega_{D_1}\ahat{D_1} a_{b(D_1)-l}\right)
    \left( \sum_{\substack{||D_2||=i-k+1\\b(D_2)= j-l+1}}
      \omega_{\hat D_2}\ahat{D_2}\right) 
  \end{align*}
\end{prop}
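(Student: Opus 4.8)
The plan is to read the left-hand side as a sum over chord diagrams organized by their diamond decomposition, and to recover the right-hand side by counting, for each pair of factor diagrams, how many diagrams glue back together to produce a prescribed base chord. First I would observe that every $C$ contributing on the left has $\lVert C\rVert = i+1 \geq 2$ and $b(C) = j+1$ with $j \geq 1$, so $|C|\geq 2$ and the diamond decomposition $C = D_1 \diamond_C D_2$ of Definition~\ref{def:diamond-operation} is defined, with $D_1$ and $D_2$ the (normalized) diagrams read off from the left and right subtrees of $T(C)$. This converts the sum over $C$ into a sum over pairs $(D_1,D_2)$ together with the admissible inherited labeling, and conversely every admissible grafting yields such a $C$.

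Next I would rewrite each summand using Lemma~\ref{lem:declemma411}, which gives
\[
\omega_{\hat C}\,\ahat{C} = \omega_{D_1}\,\omega_{\hat D_2}\,\ahat{D_1}\,\ahat{D_2}\,a_{d,\,b(D_1)+b(D_2)-b(C)},
\]
where $d = d(b(D_1))$. Setting $l := b(C) - b(D_2)$ turns the trailing factor into $a_{d,\,b(D_1)-l}$ and imposes $b(D_1)\geq l$, which is exactly the triangle inequality $b(C)\leq b(D_1)+b(D_2)$ of Lemma~\ref{lem:triangleineq}. Since $b(D_2)\geq 1$ and $b(C)=j+1$, the shift $l$ ranges over $1\leq l\leq j$, and writing $k := \lVert D_1\rVert$ (so $\lVert D_2\rVert = i-k+1$ and $1\leq k\leq i$) matches the summation ranges on the right. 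The decisive point is that this factored expression depends only on the normalized data $D_1,D_2$ and the integer $l$, not on the particular admissible labeling producing $C$; indeed $\omega_{\hat C}=\omega_{D_1}\omega_{\hat D_2}$ because, as noted before Lemma~\ref{lem:declemma411}, the diamond operation changes only the branch-left component of the base chord of $D_2$, which is precisely the factor omitted in $\omega_{\hat C}$ and $\omega_{\hat D_2}$.

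The heart of the argument, and the step I expect to be the main obstacle, is the counting: for fixed $D_1$, $D_2$ and fixed shift $l$, I must show there are exactly $\binom{j}{l}$ diagrams $C$ with $C = D_1 \diamond D_2$ and $b(C) = b(D_2)+l$. Here I would apply Proposition~\ref{prop:diamond-leaf-labeling}, whose shuffle parameter aligns with $l$ via the label $b(D_2)+l$ assigned to the fully right branch leaf. The base chord of $D_2$ is forced onto that leaf, so it is not freely shuffled; the remaining freedom is a shuffle of the $l$ labels contributed by $D_1$ with the $b(D_2)-1 = j-l$ non-base labels of $D_2$ lying below $b(C)$, giving $\binom{(j-l)+l}{l}=\binom{j}{l}$ admissible labelings, each contributing the identical weighted monomial from the previous paragraph. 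Care is needed to confirm that exactly the base chord (and no other label of $D_2$) is forced and that every such shuffle lands in the admissible set, which I would check against property P2 of Proposition~\ref{prop:instree-shape}.

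Finally I would collect terms. Summing the common summand $\omega_{D_1}\omega_{\hat D_2}\ahat{D_1}\ahat{D_2}\,a_{d,\,b(D_1)-l}$ with multiplicity $\binom{j}{l}$ over all $l$ and over all pairs $(D_1,D_2)$ subject to $\lVert D_1\rVert = k$, $b(D_1)\geq l$, $\lVert D_2\rVert = i-k+1$, and $b(D_2)=j-l+1$, the multiplicativity of the weights and monomials over the decomposition lets the sum separate into the product of a $D_1$-sum and a $D_2$-sum. This yields precisely the product of the two bracketed sums on the right-hand side, establishing the identity.
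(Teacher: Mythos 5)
Your proposal is correct and follows essentially the same route as the paper's proof: decompose the left-hand sum via the diamond decomposition $C = D_1 \diamond_C D_2$, use the triangle inequality of Lemma~\ref{lem:triangleineq} to parametrize by the shift $l = b(C)-b(D_2)$, count the $\binom{j}{l}$ admissible regraftings via Proposition~\ref{prop:diamond-leaf-labeling}, and factor the summand with Lemma~\ref{lem:declemma411}. Your write-up is in fact somewhat more explicit than the paper's about where the binomial coefficient comes from and about the independence of the factored summand from the chosen labeling.
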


\begin{proof}
We know that each chord diagram $C$ of size $i+1$ can be tree
decomposed to $C = D_1 \diamond_C D_2$ and that in this case $b(D_1) +
b(D_2) \geq b(C)$ by the triangle inequality \ref{lem:triangleineq}. 
However, given $b(D_1) \geq l$ for fixed $l$ and
$b(D_2) = j -l + 1$ for fixed $j$ there are ${j \choose l}$
possibilities for $D_1, D_2$ such that $C = D_1 \diamond_C D_2$ by Proposition \ref{prop:diamond-leaf-labeling}.
Furthermore in all cases $b(C)= j+1$.
Therefore the sum on the left hand side of the statement splits as follows:
\begin{align*}
    \sum_{\substack{||C||=i+1\\b(C)=j+1}} = \sum_{k=1}^i\sum_{l=1}^j {j \choose l} \left(
      \sum_{\substack{||D_1||=k\\b(D_1)\geq l}} \right)
    \left( \sum_{\substack{||D_2||=i-k+1\\b(D_2)= j-l+1}} \right) 
\end{align*}
Now given a monomial $\omega_{\hat C} \ahat{C}$ we know how to decompose it into the monomials as needed by
Lemma \ref{lem:declemma411}. Inserting them into the sums proves the proposition.
\end{proof}
\begin{example}
  Let $i = 3, j = 1$ and $N \geq 3$ and $s\in \mathbb Z_{\geq 2}$ be arbitrary. We have to consider all chord diagrams with $||C|| = 4, b(C) = 2$. For the decorations we need to consider all compositions of four:
  \begin{align*}
    (1,1,1,1), (2,1,1), (1,2,1), (1,1,2), (2,2), (1,3), (3,1)
  \end{align*}
 Because of the constraint $b(C) = 2$ all the chord diagrams we need to consider are
 \begin{align*}
   \rter{2,3,4}, \rter{2,4}, \rter{2,3}, \rter{2}
 \end{align*}
 For $\rter{2,3,4}$ and $\rter{2,4}$ only the decoration $d_1 = d_2 = d_3 = d_4 = 1$ is possible so we calculate these:
 We have one chord diagram with $ter = \{2,3,4\}$ and branch left vector $(0,1,1,1)$:
  \begin{align*}
    \ahat{C} = a_1^2 a_0 \text{ and } \omega_{\hat C} = (s-1)^2
  \end{align*}
There are three chord diagrams with $ter = \{2,4\}$, namely the two chord diagrams with branch left vector $(0,1,0,2)$:
   \begin{align*}
    \ahat{C} = a_2 a^2_0 \text{ and } \omega_{\hat C} = {s \choose 2}
  \end{align*}
 and the one with branch left vector $(0,1,1,1)$:
   \begin{align*}
    \ahat{C} = a_2 a^2_0 \text{ and } \omega_{\hat C} = (s-1)^2
  \end{align*}
 Summing this up $\rter{2,3,4}$ and $\rter{2,4}$ contribute to the left hand side by:
 \begin{align*}
      (s-1)^2a_1^2 a_0 + a_2 a^2_0 \left( 2{s \choose 2} + (s-1)^2 \right) = (s-1)^2 a_1^2a_0 + (2s^2-3s+1)a_2a_0^2
 \end{align*}
 $\rter{2,3}$ consists only of one chord diagram with branch left vector $(0,1,1)$ and we have to consider the decorations: $(d_1,d_2,d_3) \in \{ (2,1,1), (1,2,1), (1,1,2) \}$, so it contributes on the left hand side with:
 \begin{align*}
   \left( (2s-1) + 2(s-1) \right) a_0a_1 = (4s-3) a_0a_1
 \end{align*}
 $\rter{2}$ has only one chord diagram which contributes to left hand side by $3a_0$.
 The weight is $1$ because the branch left vector of the chord diagram is $(0,1)$ where the second coordinate is ignored by $\omega_{\hat C}$, but there are three chord diagrams to consider, namely those that are decorated by $(1,3)$,$(3,1)$ and $(2,2)$.
 So the left hand side of previous lemma is for this example:
 \begin{align*}
  (s-1)^2 a_1^2a_0 + (2s^2-3s+1)a_2a_0^2 + (4s-3) a_0a_1 + 3a_0
 \end{align*}
 For the right hand side we need to take the induced labels of $D_1$ and $D_2$ and we have to consider only the last sum:
 \begin{align*}
   \text{RHS} & = \left[ \sum_{\substack{||D_1|| = 1\\b(D_1)\geq 1}} \omega_{D_1} \ahat{D_1} a_{b(D_1)-1} \right]
\cdot \left[ \sum_{\substack{||D_2|| = 3\\b(D_2) = 1}} \omega_{\hat D_2} \ahat{D_2} \right] \\
  & + \left[ \sum_{\substack{||D_1|| = 2\\b(D_1)\geq 1}} \omega_{D_1} \ahat{D_1} a_{b(D_1)-1} \right]
\cdot \left[ \sum_{\substack{||D_2|| = 2\\b(D_2) = 1}} \omega_{\hat D_2} \ahat{D_2} \right] \\
  & + \left[ \sum_{\substack{||D_1|| = 3\\b(D_1)\geq 1}} \omega_{D_1} \ahat{D_1} a_{b(D_1)-1} \right]
\cdot \left[ \sum_{\substack{||D_2|| = 1\\b(D_2) = 1}} \omega_{\hat D_2} \ahat{D_2} \right] \\
 \end{align*}
 Since $N\in \mathbb N_{\geq 3}$, we have
 \begin{align*}
   \sum_{\substack{||D_2|| = 3\\b(D_2) = 1}} \omega_{\hat D_2} \ahat{D_2} = \sum_{\substack{||D_2|| = 2\\b(D_2) = 1}} \omega_{\hat D_2} \ahat{D_2} = \sum_{\substack{||D_2|| = 1\\b(D_2) = 1}} \omega_{\hat D_2} \ahat{D_2} = 1
 \end{align*}

 \begin{align*}
   & \sum_{\substack{||D_1|| = 1\\b(D_1)\geq 1}} \omega_{D_1} \ahat{D_1} a_{b(D_1)-1} 
\cdot = a_0 \\
   & \sum_{\substack{||D_1|| = 2\\b(D_1)\geq 1}} \omega_{D_1} \ahat{D_1} a_{b(D_1)-1} 
 = (s-1)a_0a_1 + a_0 \\
   & \sum_{\substack{||D_1|| = 3\\b(D_1)\geq 1}} \omega_{D_1} \ahat{D_1} a_{b(D_1)-1} 
 = (s-1)^2 a_0a_1^2 + (2s^3-3s+1)a_0^2a_2 + \left( (s-1) + (2s-1) \right) a_0a_1 + a_0
 \end{align*}
Summing this up, we indeed get the left hand side.
\end{example}
\begin{prop}[Restricted decorated version of Proposition 4.3 of \cite{MYchord}]
  \label{prop:restrictprop43}
  \begin{align*}
    \sum_{\substack{||C||=i+1\\b(C)=j+1\\\nu_{b(C)}=n}} \omega_{\hat C} \ahat{C} = \sum_{k=1}^i\sum_{l=1}^j {j \choose l} \left(
      \sum_{\substack{||D_1||=k\\b(D_1)\geq l}} \omega_{D_1}\ahat{D_1} a_{b(D_1)-l}\right)
    \left( \sum_{\substack{||D_2||=i-k+1\\b(D_2)= j-l+1\\\nu_{b(D_2)} = n-1}}
      \omega_{\hat D_2}\ahat{D_2}\right) 
  \end{align*}
\end{prop}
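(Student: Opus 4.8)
The plan is to run the proof of Proposition~\ref{prop:decprop43} essentially verbatim, observing that the two extra index constraints -- $\nu_{b(C)}=n$ on the left and $\nu_{b(D_2)}=n-1$ on the right -- cut out matching subfamilies under the diamond decomposition. First I would recall the ingredients already in place: every chord diagram $C$ with $|C|\geq 2$ decomposes uniquely as $C = D_1 \diamond_C D_2$, the counting factor $\binom{j}{l}$ is supplied by Proposition~\ref{prop:diamond-leaf-labeling}, and the monomial identity $\omega_{\hat C}\ahat{C} = \omega_{D_1}\omega_{\hat D_2}\ahat{D_1}\ahat{D_2}\,a_{d,b(D_1)+b(D_2)-b(C)}$ is Lemma~\ref{lem:declemma411}. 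None of these ingredients refers to $\nu_{b(C)}$, so the entire computation of Proposition~\ref{prop:decprop43} survives provided I can show that the new restrictions are compatible with the decomposition.

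The key step is the relation $\nu_{b(C)}(C) = \nu_{b(D_2)}(D_2)+1$. As already noted just before Lemma~\ref{lem:declemma411}, the branch-left vector is well behaved under $\diamond$: forming $C = D_1\diamond_C D_2$ leaves every component of $\nu$ unchanged except the one at the base chord of $D_2$. Concretely, in $T(C)$ the fully right branch leaf is simultaneously the base chord of $D_2$ and of $C$ -- this is the content of Claim~1 in the proof of Lemma~\ref{lem:declemma411}, where $b(D_2)=b(C)$ -- and grafting $D_2$ as the right subtree under the new root contributes exactly one further up-and-to-the-left edge above that leaf. Hence its $\nu$-value increases by precisely one, while $D_1$ and all other components are untouched.

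With this in hand the refinement is immediate: the condition $\nu_{b(C)}=n$ on the left selects exactly those $C$ whose diamond factor $D_2$ satisfies $\nu_{b(D_2)}=n-1$, and conversely every pair $(D_1,D_2)$ with $\nu_{b(D_2)}=n-1$ produces a $C$ with $\nu_{b(C)}=n$; the factor $D_1$ carries no $\nu$-constraint on either side, consistent with the statement. Because the weight $\omega_{\hat C}$ omits the base-chord factor, the value of $\nu_{b(C)}$ never enters any summand, so imposing $\nu_{b(C)}=n$ merely restricts the index set of the sum without altering any monomial. Intersecting the index sets of Proposition~\ref{prop:decprop43} with these matched conditions and keeping the per-monomial decomposition of Lemma~\ref{lem:declemma411} then yields the claimed identity.

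The hard part will be the branch-left increment of the second step: one must check carefully, from the definition of $\nu_v$ and of the diamond operation, that the up-left path from the base-chord leaf gains exactly one edge and that no other $\nu_k$ changes, ruling out any boundary case where the path is lengthened or shortened. Everything else is a transparent restriction of the already-established bijection (Proposition~\ref{prop:diamond-leaf-labeling}) and of the monomial decomposition (Lemma~\ref{lem:declemma411}), so once the $\nu$-relation is secured the proposition follows.
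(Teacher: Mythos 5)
Your proposal is correct and follows the paper's own argument: the paper likewise proves the proposition by establishing $\nu_{b(D_1\diamond D_2)}(D_1\diamond D_2)=\nu_{b(D_2)}(D_2)+1$ (since $b(D_1\diamond D_2)=b(D_2)$ is the fully-right-branch leaf and grafting $D_1$ on the left at the new root adds exactly one edge to that branch) and then reduces everything to Proposition~\ref{prop:decprop43}. The step you flag as the hard part is exactly the one the paper isolates, and your justification of it matches theirs.
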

\begin{proof}
  Note that $\nu_{b(D_1\diamond D_2)} (D_1\diamond D_2) =
  \nu_{b(D_2)}(D_2) + 1$. Indeed, $b(D_1\diamond D_2) = b(D_2)$ and
  attaching a tree on the left side of $T(D_2)$ increments the rightmost
  branch, which ends at the leaf $b(D_2)$. Therefore the length of the rightmost branch matches as given in the proposition. The rest follows by Proposition \ref{prop:decprop43}
\end{proof}

The key idea in generalizing the results of this section from the results of \cite{MYchord} was to break up the sums in the lemmas above according to the branch left value of the base chord.  This further suggests that the branch left vector and the rooted trees are not just technical tools, but are showing something important about the structure of chord diagrams.

\section{Bridge equation and main result}\label{sec main}

In this section we develop an equation that builds a bridge between the sum that contains only chord diagrams and the sum that is calculated by derivations which will finally let us connect $G^{\text{dif}}$ and $G^{\text{comb}}$ (see Section~\ref{sec tree} for the definitions of $G^{\text{dif}}$ and $G^{\text{comb}}$). Therefore, we will call it the bridge equation. It is proved by induction.  The following lemma gives the base case whereas the lemma after it states the general case.
\begin{lemma}
  \label{lem:bridge-equation-special}Let $i\geq 1, j\geq 1$, then
  \begin{align*}\sum_{\substack{||C|| = i+1\\d_{j+1} = 1\\\nu_{j+1} =
        1\\b(C) = j+1} } \ahat{C} \omega_{\hat C} = [x^i]
    \left( \sum_{l\geq 1} \frac{g^{\text{comb}}_l(x)}{l!} \partial_{\rho=0}^l \right)
    \rho^j 
  \end{align*}
\end{lemma}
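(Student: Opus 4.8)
The plan is to collapse the differential operator on the right to a single coefficient of $g^{\text{comb}}$ and then match that coefficient termwise against the left-hand side through the diamond decomposition. Because the hypothesis $\nu_{j+1}=1$ rigidly fixes the shape of the decomposition, this base case can be settled directly, without invoking the induction that handles the general $\nu_{j+1}$.

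First I would dispose of the operator on the right. Since $\partial_{\rho=0}^l\rho^j=j!\,\delta_{l,j}$, applying $\sum_{l\ge 1}\frac{g^{\text{comb}}_l(x)}{l!}\partial_{\rho=0}^l$ to $\rho^j$ annihilates every term except $l=j$, leaving $\frac{g^{\text{comb}}_j(x)}{j!}\cdot j!=g^{\text{comb}}_j(x)$. Hence the right-hand side is $[x^i]g^{\text{comb}}_j(x)$, and unfolding the definition of $g^{\text{comb}}_j$ gives
\[
[x^i]g^{\text{comb}}_j(x)=\sum_{\substack{D_1\in\mathcal R^{dec}\\ b(D_1)\ge j\\ \|D_1\|=i}}\omega_{D_1}\ahat{D_1}\,a_{d_{b(D_1)},\,b(D_1)-j}.
\]

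Next I would read the left-hand side through the diamond decomposition $C=D_1\diamond_C D_2$. The constraint $b(C)=j+1$ with $j\ge 1$ forces $|C|\ge 2$, so the decomposition is nontrivial and Lemma~\ref{lem:declemma411} applies. The structural input is the relation $\nu_{b(C)}=\nu_{b(D_2)}+1$ established in the proof of Proposition~\ref{prop:restrictprop43}: the hypothesis $\nu_{j+1}=\nu_{b(C)}=1$ forces $\nu_{b(D_2)}=0$, so $T(D_2)$ is a single vertex and $D_2$ is a single chord. The hypothesis $d_{j+1}=1$ says this chord is decorated by $1$, whence $\|D_2\|=1$, $\|D_1\|=i$, and $b(D_2)=1$, $\ahat{D_2}=1$, $\omega_{\hat D_2}=1$. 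Substituting these into Lemma~\ref{lem:declemma411} yields
\[
\omega_{\hat C}\ahat{C}=\omega_{D_1}\omega_{\hat D_2}\ahat{D_1}\ahat{D_2}\,a_{d,\,b(D_1)+b(D_2)-b(C)}=\omega_{D_1}\ahat{D_1}\,a_{d_{b(D_1)},\,b(D_1)-j},
\]
which is precisely the summand indexed by $D_1$ on the right.

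It then remains to check that $C\mapsto D_1$ is a bijection onto the right-hand index set, and I expect this to be the main obstacle. The forward map (take the left subtree of $T(C)$) is canonical. For the inverse, given $D_1$ with $b(D_1)\ge j$ and $\|D_1\|=i$, I would rebuild $C$ as $D_1\diamond D_2$ with $D_2$ the single chord decorated by $1$, and use Proposition~\ref{prop:diamond-leaf-labeling} to enumerate the admissible labelings. Since $b(D_2)=1$, the count ${j\choose l}$ appearing in the proof of Proposition~\ref{prop:decprop43} specializes: $j-l+1=b(D_2)=1$ forces $l=j$ and ${j\choose j}=1$, so there is a unique admissible labeling, it produces $b(C)=j+1$ and $\nu_{b(C)}=1$, and it reconciles the range $b(D_1)\ge l=j$ and the shift $b(D_1)-l=b(D_1)-j$ of the monomial. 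The finishing touches are to confirm that the normalized labelings of $D_1$ and $D_2$ are used consistently and that the ignored base-chord decoration (harmless by Lemma~\ref{prop:big-restricted-sum-eq}) introduces no double counting; summing the matched summands then delivers the stated identity.
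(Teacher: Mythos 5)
Your proposal is correct and follows essentially the same route as the paper's own proof: collapse the operator to the single term $l=j$ so the right side becomes $[x^i]g^{\text{comb}}_j(x)$, then match summands via the diamond decomposition $C=D_1\diamond D_2$ with $D_2$ a single chord (forced by $\nu_{j+1}=1$), using Lemma~\ref{lem:declemma411} for the equality of monomials and the uniqueness of the admissible shuffle for the bijection. Your write-up is if anything slightly more explicit than the paper's on why $\nu_{b(D_2)}=0$ forces $D_2$ to be a single chord and on the ${j\choose j}=1$ count for the labeling.
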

\begin{proof}
  Because the right hand side of the equation is only non-zero when $l=j$, it suffices to prove 
  \begin{align}
  \label{eq:lem5}
\sum_{\substack{||C|| = i+1\\d_{j+1} = 1\\\nu_{j+1} =
        1\\b(C) = j+1} } \ahat{C} \omega_{\hat C} = 
\sum_{\substack{||D|| = i\\b(D) \geq j} } \omega_{D}\ahat{D} a_{d(b(D)),b(D)-j}
  \end{align}
  Consider first $D$ satisfying the conditions on the right hand side of \eqref{eq:lem5}.  Let $C = D \diamond D_2$ with $D_2$ being the chord diagram with only one chord. We have the triangle inequality for
  the label of the base chords:  $b(D) \geq b(C) - b(D_2)$. Note that there indeed exists only one $C$ with this diamond decomposition and $b(C) = j+1$; this is because there is only one compatible shuffle which ends with the integer $j+1$. 
  Now, in the sum of the left hand side of \eqref{eq:lem5} every chord diagram $C$ splits into
  $D\diamond D_2$ with $D_2$ being the chord diagram with only one
  chord since $\nu_{j+1}=1$. Thus $D$ now has size $||D||=||C|| - d_{j+1} = i$ and its
  base chord satisfies $b(D) \geq j+1 -1 = j$. Thus the chord diagrams on each side of \eqref{eq:lem5} correspond via $C = D\diamond D_2$ and it is left to see
  that the corresponding summands are actually equal.
  This was already done in Lemma ~\ref{lem:declemma411}.
\end{proof}
Now, we can use this result to prove the statement that we are looking for:
\begin{lemma}[Bridge equation]Let $n\geq 1$, then
  \label{lem:bridge-equation}
\begin{align*}
  \sum_{\substack{||C|| = i+1\\d_{j+1} = 1\\\nu_{j+1} = n\\b(C)=j+1} } \ahat{C} \omega_{\hat C} = [x^i] \left( \sum_{l\geq 1} \frac{g^{\text{comb}}_l(x)}{l!} \partial_{\rho=0}^l \right)^n \rho^j
\end{align*}
\end{lemma}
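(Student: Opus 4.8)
The plan is to induct on $n$, taking Lemma~\ref{lem:bridge-equation-special} as the base case $n=1$. Before starting the induction I would fix the meaning of the operator power: writing $\mathcal D=\sum_{l\geq 1}\frac{g^{\text{comb}}_l(x)}{l!}\partial_{\rho=0}^l$, the coefficients $g^{\text{comb}}_l(x)$ are independent of $\rho$, so they commute past every $\partial_\rho$, and the evaluation at $\rho=0$ is performed only after all $n$ copies of $\mathcal D$ have acted. Peeling off the rightmost factor and using $\partial_\rho^l\rho^j=\frac{j!}{(j-l)!}\rho^{j-l}$, I would first record the recursion
\[
\mathcal D^{\,n}\rho^j=\sum_{l=1}^{j}\binom{j}{l}\,g^{\text{comb}}_l(x)\,\mathcal D^{\,n-1}\rho^{\,j-l},
\]
which reduces the right-hand side of the Bridge equation at level $n$ to the operator at level $n-1$ applied to $\rho^{\,j-l}$.

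Next I would extract $[x^i]$, split the Cauchy product $[x^i]\big(g^{\text{comb}}_l\cdot\mathcal D^{\,n-1}\rho^{\,j-l}\big)=\sum_k([x^k]g^{\text{comb}}_l)([x^{i-k}]\mathcal D^{\,n-1}\rho^{\,j-l})$, and apply the induction hypothesis to the second factor. Since by definition $[x^k]g^{\text{comb}}_l=\sum_{||D_1||=k,\;b(D_1)\geq l}\omega_{D_1}\ahat{D_1}a_{b(D_1)-l}$, this turns the right-hand side into
\[
[x^i]\,\mathcal D^{\,n}\rho^j=\sum_{l=1}^{j}\binom{j}{l}\sum_{k}\left(\sum_{\substack{||D_1||=k\\ b(D_1)\geq l}}\omega_{D_1}\ahat{D_1}a_{b(D_1)-l}\right)\left(\sum_{\substack{||D_2||=(i-k)+1,\;d_{(j-l)+1}=1\\ \nu_{(j-l)+1}=n-1,\;b(D_2)=(j-l)+1}}\omega_{\hat D_2}\ahat{D_2}\right).
\]
The goal from here is to recognize this double sum as the left-hand side $\sum_{||C||=i+1,\;d_{j+1}=1,\;\nu_{j+1}=n,\;b(C)=j+1}\ahat{C}\omega_{\hat C}$.

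That recognition is exactly what Proposition~\ref{prop:restrictprop43} provides, read as an identity assembling chord diagrams $C$ from pairs $(D_1,D_2)$ via the diamond decomposition: with its parameters instantiated at $(i,j,n)$ it matches the displayed double sum term for term, once the decoration of the base chord is accounted for. The only discrepancy is that Proposition~\ref{prop:restrictprop43} sums over all base-chord decorations while the Bridge left-hand side fixes $d_{j+1}=1$. I would close this using the fact (already exploited in Lemma~\ref{lem:declemma411}) that under $C=D_1\diamond_C D_2$ the base chord of $C$ is the base chord of $D_2$ with the same decoration, so imposing $d_{b(C)}=1$ is literally the restriction $d_{b(D_2)}=1$ appearing in the second factor above; Lemma~\ref{prop:big-restricted-sum-eq} confirms that these $\nu$-restricted sums are insensitive to the chosen base-chord decoration because $\ahat{\cdot}$ and $\omega_{\hat{\cdot}}$ discard the base chord. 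With the decoration pinned, the second factor is precisely the level-$(n-1)$ Bridge sum at $(i-k,\,j-l)$, completing the induction.

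I expect the main obstacle to be the decoration and index bookkeeping rather than any new idea: one must check that pinning $d_{j+1}=1$ on $C$ forces $||D_2||=(i-k)+1$ with non-base weight $i-k$, so the inner sum is genuinely the induction hypothesis at $(i-k,\,j-l,\,n-1)$ and not a shifted variant, and that the summation ranges of $k$ and $l$ agree on both sides. The degenerate terms, where $l=j$ forces $\rho^{0}$ and hence $b(D_2)=1$ with $\nu_{1}=n-1\geq 1$, vanish on both sides, and the apparent mismatch in the lower limit of $k$ is harmless because $[x^0]g^{\text{comb}}_l=0$ for $l\geq 1$; I would note these points but they need no real argument.
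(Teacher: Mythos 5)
Your proposal is correct and follows essentially the same route as the paper: induction on $n$ with Lemma~\ref{lem:bridge-equation-special} as the base case, peeling off one factor of the operator (the paper's ``Leibniz-Rule'' step is exactly your $\partial_\rho^l\rho^j$ computation combined with the Cauchy product), applying the induction hypothesis, and closing with Proposition~\ref{prop:restrictprop43}. Your extra remarks on why the restriction $d_{j+1}=1$ is compatible with Proposition~\ref{prop:restrictprop43} (the base chord of $C$ is that of $D_2$ and its decoration is discarded by $\ahat{C}$ and $\omega_{\hat C}$) make explicit a point the paper leaves implicit, but this is a matter of care rather than a different argument.
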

\begin{proof}
  For better readability define: $G_{\partial \rho}(x) := \sum_{l\geq 1} \frac{g^{\text{comb}}_l(x)}{l!} \partial_{\rho=0}^l$
  and $F_{i,j,n} := [x^i] G_{\partial \rho}^n(x) \rho^j$. Let $i,j$ be fixed; we prove the statement by induction over $n$.
  For $n=1$ the statement is true by the previous Lemma.
  Now analogously to the proof of Lemma 4.14 in \cite{MYchord}, we observe that:
  \begin{align*}
    F_{i,j,n} & =[x^i] G_{\partial\rho}^n(x) \rho^j \\
    & = \sum_{k=1}^i \left( [x^k]G_{\partial\rho}(x) \right) \left( [x^{i-k}]G_{\partial\rho}^{n-1}(x) \right) \rho^j \\
    & \stackrel{\text{Leibniz-Rule}}{=}  \sum_{k=1}^i \sum_{l=1}^j {j \choose l}\left( [x^k]G_{\partial\rho}(x)\rho^l \right) \left( [x^{i-k}]G_{\partial\rho}^{n-1}  \rho^{j-l} \right) \\
    &  \stackrel{\text{Definition}}{=} \sum_{k=1}^i \sum_{l=1}^j {j \choose l} [x^k]g^{\text{comb}}_l(x) \cdot F_{i-k,j-l,n-1} \\
    &  \stackrel{\text{induction}}{=}\sum_{k=1}^i\sum_{l=1}^j {j \choose l} \left(
      \sum_{\substack{||D_1||=k\\b(D_1)\geq l}} \omega_{D_1}\ahat{D_1} a_{b(D_1)-l}\right)
    \left( \sum_{\substack{||D_2||=i-k+1\\b(D_2)= j-l+1\\\nu_{b(D_2)} = n-1}}
      \omega_{\hat D_2}\ahat{D_2}\right) \\
    & \stackrel{\text{Prop} ~\ref{prop:restrictprop43}}{=}\sum_{\substack{||C|| = i+1\\d_{j+1} = 1\\\nu_{j+1} = n} } \ahat{C} \omega_{\hat C} 
  \end{align*}
\end{proof}

The next result tells us that $g_1^{\text{comb}}$ satisfies the same recurrence as given by the Dyson-Schwinger equation itself for $g_1^{\text{dif}}$.
\begin{lemma}
  \label{thm:g1com-eq}
  Let $G^{\text{comb}}_{\partial\rho} :=  \sum_{l\geq 1} \frac{g^{\text{comb}}_l}{l!}$ and $\tilde F_k(\rho) := \sum_{l\geq 0}a_{k,l}\rho^l$,then 
  \begin{align*}
      g_1^{\text{comb}} = \sum_{k=1}^N x^k\sum_{n\geq 0} {n+sk-2 \choose n } (G^{\text{comb}}_{\partial\rho})^n \tilde F_k(\rho) 
  \end{align*}
\end{lemma}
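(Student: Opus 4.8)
The plan is to begin with the definition
\[
g_1^{\text{comb}}(x)=\sum_{C\in\mathcal R^{dec}}x^{||C||}\,\omega_C\,\ahat{C}\,a_{d_{b(C)},\,b(C)-1}
\]
(the condition $b(C)\geq1$ being vacuous) and to reorganize it around the base chord. Splitting off the base chord's weight factor, $\omega_C=\omega_{\hat C}\binom{d_{b(C)}s+\nu_{b(C)}-2}{\nu_{b(C)}}$, and grouping the terms according to the decoration $k=d_{b(C)}$, the branch-left value $n=\nu_{b(C)}$ and the label $j+1=b(C)$, I would rewrite this as
\[
g_1^{\text{comb}}=\sum_{k\geq1}\sum_{n\geq0}\sum_{j\geq0}\binom{sk+n-2}{n}a_{k,j}\Bigg(\sum_{\substack{d_{b(C)}=k,\ \nu_{b(C)}=n\\ b(C)=j+1}}x^{||C||}\omega_{\hat C}\ahat{C}\Bigg),
\]
using that the base chord supplies exactly the factor $a_{d_{b(C)},b(C)-1}=a_{k,j}$ and that its decoration ranges over the (finitely many) primitive indices.

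For each fixed $(k,n,j)$ I would now strip the base chord off the exponent by writing $||C||=i+k$, so $x^{||C||}=x^k x^i$ with $i\geq0$. Lemma~\ref{prop:big-restricted-sum-eq} replaces the base decoration $k$ by $1$ while lowering $||C||$ to $i+1$, turning the inner sum into $\sum_{i\geq0}x^k x^i\sum_{||C||=i+1,\,d_{j+1}=1,\,\nu_{j+1}=n,\,b(C)=j+1}\omega_{\hat C}\ahat{C}$. For $n\geq1$ the Bridge equation (Lemma~\ref{lem:bridge-equation}) identifies the $C$-sum with $[x^i]G_{\partial\rho}^n\rho^j$, where $G_{\partial\rho}=\sum_{l\geq1}\frac{g_l^{\text{comb}}}{l!}\partial_{\rho=0}^l$, so the inner sum collapses to $x^k\sum_{i\geq0}x^i[x^i]G_{\partial\rho}^n\rho^j=x^kG_{\partial\rho}^n\rho^j$.

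The step that needs care is the excluded case $n=0$, where the Bridge equation does not apply. Here I would use the structural fact, already exploited in Lemma~\ref{lem:declemma411} and Proposition~\ref{prop:restrictprop43}, that the base chord is the leaf at the end of the fully-right branch of $T(C)$ and that $\nu_{b(C)}$ equals the length of that branch; hence $\nu_{b(C)}=0$ forces $T(C)$ to be a single vertex, i.e. $|C|=1$. The unique such diagram with base decoration $k$ has $||C||=k$, $b(C)=1$ (so $j=0$) and $\omega_{\hat C}=\ahat{C}=1$, contributing $x^k a_{k,0}$; and indeed $x^kG_{\partial\rho}^0\rho^j=x^k(\rho^j|_{\rho=0})$ equals $x^k$ when $j=0$ and $0$ otherwise, matching the (empty) inner sums for $n=0,\ j\geq1$. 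Thus the collapsed identity $\text{inner sum}=x^kG_{\partial\rho}^n\rho^j$ holds uniformly for $n\geq0$. Substituting it back and using linearity of $G_{\partial\rho}^n$ to pull $\sum_j a_{k,j}\rho^j=\tilde F_k(\rho)$ inside gives
\[
g_1^{\text{comb}}=\sum_{k\geq1}x^k\sum_{n\geq0}\binom{n+sk-2}{n}G_{\partial\rho}^n\tilde F_k(\rho),
\]
which is the claim; the $n$-sum is well defined order by order in $x$ since $g_l^{\text{comb}}=O(x^l)$ makes $G_{\partial\rho}$ raise the $x$-degree by at least one. The only genuine obstacle is this $n=0$ boundary bookkeeping together with checking that the shift $||C||=i+k$ is exactly what Lemma~\ref{prop:big-restricted-sum-eq} compensates.
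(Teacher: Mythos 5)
Your proposal is correct and follows essentially the same route as the paper's proof: both rest on grouping/ungrouping the sum over diagrams by the base chord's decoration $k$, branch-left value $n$ and label $j+1=b(C)$, absorbing the binomial $\binom{sk+n-2}{n}$ into $\omega_{\hat C}$ to recover $\omega_C$, and invoking Lemma~\ref{prop:big-restricted-sum-eq} together with the Bridge equation (Lemma~\ref{lem:bridge-equation}); the only differences are that you run the computation from $g_1^{\text{comb}}$ toward the right-hand side rather than the reverse, and that you explicitly treat the $n=0$ (single-chord) boundary case, which the paper leaves implicit.
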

\begin{proof}
  Consider the coefficient of $x^i$ on the right hand side of the equation that we want to show.
\begin{align*}
   [x^i] \text{RHS} & =  \sum_{k=1}^N \sum_{n\geq 0} {n+sk-2 \choose n } \sum_{l\geq 0} a_{k,l} [x^{i-k}](G^{\text{comb}}_{\partial\rho})^n \rho^l
 \end{align*}
Now, we can use the bridge equation and Lemma \ref{prop:big-restricted-sum-eq}:
\begin{align*}
  \sum_{\substack{||C|| = i+k\\d_{l+1} = k\\\nu_{l+1} = n\\b(C)=l+1} } \ahat{C} \omega_{\hat C} = [x^i] (G^{\text{comb}}_{\partial\rho})^n\rho^l \;\text{ for all } k = 1 \ldots N.
\end{align*}
So we get:
\begin{align*}
 [x^i] \text{RHS} & = \sum_{k=1}^N \sum_{n\geq 0} {n+sk-2 \choose n } \sum_{l\geq 0} a_{k,l} \sum_{\substack{||C|| = i\\d_{l+1} = k\\\nu_{l+1} = n\\b(C)=l+1} } \ahat{C} \omega_{\hat C} \\
    & = \sum_{k=1}^N\sum_{n\geq 0}  \sum_{l\geq 0} a_{k,l} \sum_{\substack{||C|| = i\\d_{l+1} = k\\\nu_{l+1} = n\\b(C)=l+1} } \ahat{C} \omega_{\hat C} {\nu_{l+1}+sk-2 \choose \nu_{l+1} } \\
    & = \sum_{k=1}^N\sum_{n\geq 0}  \sum_{l\geq 0} \sum_{\substack{||C|| = i\\d_{b(C)} = k\\\nu_{b(C)} = n\\b(C)=l+1} }  a_{k,b(C)-1}\ahat{C} \omega_{\hat C} {\nu_{l+1}+sk-2 \choose \nu_{l+1} } 
\end{align*}
Since we have $l = b(C) + 1$, we see that 
\begin{align*}
  \omega_{\hat C} {\nu_{l+1}+sk-2 \choose \nu_{l+1} } = \omega_C
\end{align*}
Look carefully at the restrictions of the last sum. We need to verify that we can drop the last three constraints because we are summing over all possible $k,n,l$:
\begin{enumerate}
\item The restriction $d_{b(C)} = k$ drops because we are summing over all $k = 1\ldots N$. 
\item The restriction of $\mathcal R_{dec}$ to $||C|| = i, \nu_{b(C)} = n$ is always non-empty for some $n$ and summing over all $n$ indeed yields all rooted connected decorated chord diagrams with $||C|| = i$.
\item $b(C) \geq 1$ so we can drop $b(C) = l+1$ and the sum over $l$.
\end{enumerate}
In conclusion
\begin{align*}
[x^i]\text{RHS} = \sum_{||C|| = i} \omega_C\ahat{C} a_{d(b(C)),b(C)-1} = g_1^{\text{comb}}
\end{align*}
\end{proof}


\begin{thm}\label{main thm}
The analytic Dyson-Schwinger equation 
\begin{align*}
 & G(x,L) = 1 - \sum_{k \geq 1} x^kG(x,\partial_{-\rho = 0})^{1-sk} (e^{-L\rho}-1)F_k(\rho) \text{ where } F_k(\rho) = \sum_{l\geq 0} a_{k,l}\rho^{l-1}\\   
\end{align*}
has as formal solution the following combinatorial expansion in terms of chord diagrams:
\begin{align*}
  G(x,L) = 1 - \sum_{k\geq 1} \frac{(-L)^k}{k!} \sum_{b(C)\geq k} \omega_C \ahat{C} a_{d_{b(C)},b(C)-k}x^{||C||}
\end{align*}
\end{thm}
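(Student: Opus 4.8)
The plan is to prove the theorem by showing that the combinatorial series
\[
G^{\text{comb}}(x,L) = 1 - \sum_{k\geq 1}\frac{(-L)^k}{k!}g^{\text{comb}}_k(x)
\]
formally satisfies the Dyson--Schwinger equation \eqref{DSE}. Since the right hand side of \eqref{DSE} carries an explicit factor $x^k$ with $k\geq 1$, the coefficient of $x^i$ on the right depends only on the coefficients of lower order in $x$ of $G$, so \eqref{DSE} determines a unique formal bivariate power series solution. Hence, once $G^{\text{comb}}$ is shown to satisfy \eqref{DSE}, it must coincide with $G^{\text{dif}}$, and the stated expansion follows by unfolding the definition of $g^{\text{comb}}_k$. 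The reduction I would use is the standard fact in this setting that \eqref{DSE} is equivalent to the pair of conditions: (i) the renormalization group recurrence expressing every $g_k$ in terms of $g_1$, and (ii) the single equation for the anomalous dimension $g_1$ obtained by extracting the coefficient of $L^1$ from \eqref{DSE}. It then suffices to verify (i) and (ii) for $G^{\text{comb}}$.

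For condition (i), I would invoke Theorem~\ref{thm:rge-equation}, which already gives
\[
g^{\text{comb}}_k(x) = g^{\text{comb}}_1(x)\cdot(sx\partial_x-1)g^{\text{comb}}_{k-1}(x).
\]
The point to confirm is that this is the \emph{same} recurrence obeyed by the analytic coefficients. Writing $G^{\text{dif}} = 1 - \sum_i L^i\gamma_i$, Proposition~\ref{prop:diff rg eqn} gives $\gamma_k = \tfrac1k\gamma_1(1-sx\tfrac{d}{dx})\gamma_{k-1}$, and substituting the normalization $g_k = \tfrac{(-1)^k}{k!}[L^k]G$ absorbs the factor $\tfrac1k$ into the factorials and fixes the operator so as to reproduce the form $(sx\partial_x-1)$ above. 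I would carry out this bookkeeping explicitly so that the combinatorial rule of Theorem~\ref{thm:rge-equation} and the analytic rule of Proposition~\ref{prop:diff rg eqn} become literally identical as the recipe generating $g_k$ from $g_1$.

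For condition (ii), I would extract the coefficient of $L^1$ from \eqref{DSE}. Using $[L^1](e^{-L\rho}-1)=-\rho$, the negative binomial expansion $(1-H)^{1-sk}=\sum_{n\geq0}{sk+n-2\choose n}H^n$ with $H=G(x,\partial_{-\rho})-1$, the identification $\rho F_k(\rho)=\tilde F_k(\rho)$, and the identification of $G(x,\partial_{-\rho})$ at $\rho=0$ with the operator $G^{\text{comb}}_{\partial\rho}=\sum_{l\geq1}\frac{g^{\text{comb}}_l}{l!}\partial_{\rho=0}^l$, the $L^1$-part of \eqref{DSE} becomes exactly
\[
g_1 = \sum_{k=1}^N x^k\sum_{n\geq0}{n+sk-2\choose n}(G^{\text{comb}}_{\partial\rho})^n\tilde F_k(\rho),
\]
which is the content of Lemma~\ref{thm:g1com-eq}. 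Since that lemma asserts $g^{\text{comb}}_1$ solves this equation, condition (ii) holds, and together with (i) this shows $G^{\text{comb}}$ satisfies \eqref{DSE}.

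The genuine mathematical content has already been absorbed into the upstream results—the bridge equation (Lemma~\ref{lem:bridge-equation}), the diamond and root-share decompositions, and the renormalization group equation for $g^{\text{comb}}$ (Theorem~\ref{thm:rge-equation})—so the remaining work in this final step is almost entirely bookkeeping, and I expect that bookkeeping to be the main obstacle: confirming that the factorial and sign normalizations make the combinatorial and analytic renormalization group equations coincide as recurrences, and confirming that the $L^1$-extraction of \eqref{DSE}, after the negative binomial expansion of the operator power $G(x,\partial_{-\rho})^{1-sk}$, reproduces exactly the summand ${n+sk-2\choose n}$ and the shifted integrand $\tilde F_k$ of Lemma~\ref{thm:g1com-eq}. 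Once these identifications are pinned down, uniqueness of the formal solution of \eqref{DSE} finishes the proof.
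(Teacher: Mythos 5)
Your proposal follows essentially the same route as the paper: reduce to the renormalization group recurrence (Theorem~\ref{thm:rge-equation} versus Proposition~\ref{prop:diff rg eqn}) plus the $L^1$-extracted equation for $g_1$ (Lemma~\ref{thm:g1com-eq} versus the negative-binomial expansion of $G^{\text{dif}}(x,\partial_{-\rho})^{1-sk}$), and then conclude by an order-by-order uniqueness argument in $x$. The only cosmetic difference is that the paper, rather than asserting an abstract equivalence of \eqref{DSE} with the pair (i)--(ii), directly forms $g_1^{\text{comb}}-g_1^{\text{dif}}$ and inducts on the coefficient of $x^k$, using the renormalization group equation to rewrite the higher $g_\ell$ inside $(G_{\partial\rho})^n$ in terms of lower-order coefficients of $g_1$ --- the same bookkeeping you flag as the remaining work.
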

\begin{proof}
Since the renormalization group equation is true for both $g^{\text{comb}}$ (by Theorem \ref{thm:rge-equation}) and $g^{\text{dif}}$ (by Proposition \ref{prop:diff rg eqn}) , $G^{\text{dif/comb}}$ are built from $g^{\text{dif/comb}}_1$ in completely the same way:
  \begin{align*}
    g^{\text{dif/comb}}_k = g_1^{\text{dif/comb}} (sx\partial_x - 1)g_{k-1}^{\text{dif/comb}}  \; \text{ for } k \geq 2
  \end{align*}
For $g_1^{\text{dif}}$, by applying the generalized geometric series to $G^{\text{dif}}(x,\partial_{-\rho})^{1-sk}$ in the Dyson-Schwinger equation we have:
\begin{align*}
  g_1^{\text{dif}} = \sum_{k=1}^\infty x^k\sum_{n\geq 0} {n+sk-2 \choose n } (G^{\text{dif}}_{\partial\rho})^n \tilde F_k(\rho) 
\end{align*}
where $\tilde F_k(\rho) = (e^{-L\rho}-1)\frac{1}{\rho}\sum_{l\geq 0}a_{k,l}\rho^l$. 
 So, by Lemma \ref{thm:g1com-eq} for $g_1^{\text{comb}}$ and the Dyson-Schwinger equation itself for $g_1^{\text{dif}}$
\begin{align*}
  g_1^{\text{comb}}(x) - g_1^{\text{dif}}(x) = \sum_{k=1}^N x^k\sum_{n\geq 0} {n+sk-2 \choose n } \left( (G^{\text{comb}}_{\partial\rho})^n - (G^{\text{dif}}_{\partial\rho})^n \right)\tilde F_k(\rho) 
\end{align*}
Write $g_1^{\text{comb}} = \sum_{k\geq 1} g_{1,k}x^k$ and similarly for $g_1^{\text{dif}}$.  
The first coefficients of $g_1^{\text{comb}}$ and $g_1^{\text{dif}}$ are the same.
Apply the renormalization group equation to convert the bigger $g_k$ to $g_1$; note that the coefficient of $x^k$ in the equation above only involves $g_{1,\ell}$ with $\ell < k$ on the right hand side.  So inductively we obtain $g^{\text{comb}}_{1,k} = g^{\text{dif}}_{1,k}$ and so 
\begin{align*}
  g_1^{\text{comb}}(x) - g_1^{\text{dif}}(x) = 0
\end{align*}
\end{proof}

\section{Conclusion}

What we have achieved in our main theorem, Theorem~\ref{main thm} is to solve a family of Dyson-Schwinger equations as expansions over decorated rooted connected chord diagrams.  The analytic input to the expansion is the expansion of the regularized integrals for the primitive graphs building the Dyson-Schwinger equation.  Each decorated chord diagram in the expansion contributes a weighted monomial in the coefficients of the expansions of the primitives.

Compared to \cite{MYchord} the achievement here has been the generalization to a substantially larger class of Dyson-Schwinger equations.  This indicates that these chord diagram expansions are not mere curiosities but actually quite general.  Whether or not this would be the case was one of the main open questions after \cite{MYchord}.  This generalization also clarifies some of the technical underpinnings, in particular showing the key role that the binary rooted tree associated to a chord diagram plays.

Furthermore, in the course of these investigations one of us (MH) wrote code to calculate chord diagrams, their weights and all the associated objects which we define.  This code is available in \cite{Hphd}.

\medskip

There are a few obvious things to consider next.  First would be a generalization to systems of Dyson-Schwinger equations which would capture more of the Dyson-Schwinger equations of interest in physics.  Second is to consider the asymptotic consequences of these results.  There is a reasonably good understanding of the asymptotics of chord diagrams, but not of the particular parameters important here.  This is being investigated in the simpler case of \cite{MYchord} by the second author with Julien Courtiel.  Third there are unresolved combinatorial issues.  For example the rooted tree construction is not very natural combinatorially, but its importance suggests that it should be.  Therefore, there ought to be a more transparent reformulation of the rooted trees.  There are also many patterns in the coefficients of our chord diagram expansion which have barely been investigated.  See \cite{Hphd} for more details of some of these patterns.

\bibliographystyle{plain}
\bibliography{main}

\begin{thebibliography}{10}

\bibitem{BKerfc}
D.J. Broadhurst and D.~Kreimer.
\newblock Exact solutions of {D}yson-{S}chwinger equations for iterated
  one-loop integrals and propagator-coupling duality.
\newblock {\em Nucl. Phys. B}, 600:403--422, 2001.
\newblock arXiv:hep-th/0012146.

\bibitem{ck0}
Alain Connes and Dirk Kreimer.
\newblock Hopf algebras, renormalization and noncommutative geometry.
\newblock {\em Commun. Math. Phys.}, 199:203--242, 1998.
\newblock arXiv:hep-th/9808042.

\bibitem{e-fksurvey}
Kurusch Ebrahimi-Fard and Dirk Kreimer.
\newblock Hopf algebra approach to {F}eynman diagram calculations.
\newblock {\em J. Phys. A}, 38:R285--R406, 2005.
\newblock arXiv:hep-th/0510202.

\bibitem{Fdse}
Lo\"ic Foissy.
\newblock Fa\`a di {B}runo subalgebras of the {H}opf algebra of planar trees
  from combinatorial {D}yson-{S}chwinger equations.
\newblock {\em Advances in Mathematics}, 218(1):136--162, 2007.
\newblock arXiv:0707.1204.

\bibitem{Fphysical}
Lo\"ic Foissy.
\newblock General {D}yson-{S}chwinger equations and systems.
\newblock {\em Communications in Mathematical Physics}, 327(1):151--179, 2014.
\newblock arXiv:1112.2606.

\bibitem{Hphd}
Markus Hihn.
\newblock {\em The generalized chord diagram expansion}.
\newblock PhD thesis, Humboldt University of Berlin, (submitted 2015).

\bibitem{MYchord}
Nicolas Marie and Karen Yeats.
\newblock A chord diagram expansion coming from some {D}yson-{S}chwinger
  equations.
\newblock {\em Communications in Number Theory and Physics}, 7(2):251--291,
  2013.
\newblock arXiv:1210.5457.

\bibitem{vS}
Walter~D. van Suijlekom.
\newblock Renormalization of gauge fields: A {H}opf algebra approach.
\newblock {\em Commun. Math. Phys.}, 276:773--798, 2007.
\newblock arXiv:hep-th/0610137.

\bibitem{vS2}
Walter~D. van Suijlekom.
\newblock Renormalization of gauge fields using {H}opf algebras.
\newblock In B.~Fauser, J.~Tolksdorf, and E.~Zeidler, editors, {\em Recent
  Developments in Quantum Field Theory}. Birkhauser Verlag, 2008.
\newblock arXiv:0801.3170v1.

\bibitem{Ymem}
Karen Yeats.
\newblock Rearranging {D}yson-{S}chwinger equations.
\newblock {\em Mem. Amer. Math. Soc.}, 211, 2011.

\bibitem{kythesis}
Karen~Amanda Yeats.
\newblock {\em Growth estimates for {D}yson-{S}chwinger equations}.
\newblock PhD thesis, Boston University, 2008.

\end{thebibliography}

\end{document}